\providecommand{\U}[1]{\protect\rule{.1in}{.1in}}
\newtheorem{theorem}{Theorem}
\newtheorem{acknowledgement}[theorem]{Acknowledgement}
\newenvironment{proof}[1][Proof]{\noindent\textbf{#1.} }{\ \rule{0.5em}{0.5em}}
\begin{document}

\title{Symplectic Coarse-Grained Classical and Semclassical Evolution of Subsystems:
New Theoretical Approach}
\author{Maurice A. de Gosson\thanks{maurice.de.gosson@univie.ac.at}\\University of Vienna\\Faculty of Mathematics (NuHAG)\\Oskar-Morgenstern-Platz 1\\1090 Vienna AUSTRIA}
\maketitle

\begin{abstract}
We study the classical and semiclassical time evolutions of subsystems of a
Hamiltonian system; this is done using a generalization of Heller's thawed
Gaussian approximation introduced by Littlejohn. The key tool in our study is
an extension of Gromov's "principle of the symplectic camel" obtained in
collaboration with N. Dias and J. Prata. This extension says that the
orthogonal projection of a symplectic phase space ball on a phase space with a
smaller dimension also contains a symplectic ball with the same radius. In the
quantum case, the radii of these symplectic balls are taken equal to
$\sqrt{\hbar}$ and represent ellipsoids of minimum uncertainty, which we have
called "quantum blobs" in previous work.

\end{abstract}

\begin{center}

\end{center}

\section{Introduction}

Let us consider a bipartite physical system $A\cup B$ consisting of two
subsystems $A$ and $B$ with phase spaces $\mathbb{R}^{2n_{A}}$ and
$\mathbb{R}^{2n_{B}}$. We assume that both $A$ and $B$ are Hamiltonian, with
respective Hamiltonian functions $H_{A}(x_{A},p_{A})$ and $H_{B}(x_{B},p_{B}%
)$. As long as $A$ and $B$ do not interact in any way, the time evolution of
the total system $A\cup B$ can be predicted by solving separately the Hamilton
equations
\begin{equation}
\left\{
\begin{array}
[c]{c}%
\dot{x}_{A}=\nabla_{p_{A}}H_{A}(x_{A},p_{A})\\
\dot{p}_{A}=-\nabla_{x_{A}}H_{A}(x_{A},p_{A})
\end{array}
\right.  \text{\ } \label{A}%
\end{equation}
for the system $A$ and those,%
\begin{equation}
\left\{
\begin{array}
[c]{c}%
\dot{x}_{B}=\nabla_{p_{B}}H_{B}(x_{B},p_{B})\\
\dot{p}_{B}=-\nabla_{x_{B}}H_{B}(x_{B},p_{B})
\end{array}
\right.  \label{B}%
\end{equation}
for the system $B$. The composite system $A\cup B$ has Hamiltonian
$H=H_{A}+H_{B}$ defined on $\mathbb{R}^{2n}\equiv\mathbb{R}^{2n_{A}}%
\times\mathbb{R}^{2n_{B}}$ and its evolution is fully determined by the values
of $(x_{A},p_{A})$ and $(x_{B},p_{B})$ at some initial time, say, $t=0$. The
situation becomes much more intricate when the two systems $A$ and $B$ are
allowed to interact (which is generically the case). One must then add an
interaction term $H_{\mathrm{inter}}$ to $H_{A}+H_{B}$, and the total
Hamiltonian is then
\begin{multline}
H=H_{A}(x_{A},p_{A})+H_{B}(x_{B},p_{B})\label{Hab}\\
+H_{\mathrm{inter}}(x_{A},x_{B};p_{A},p_{B})~.
\end{multline}
Such Hamiltonian functions frequently appear in molecular dynamics and in the
Kepler problem. For instance, we can assume that the whole system consists of
$N$ molecules moving in physical three-dimensional space so that the full
phase space is ${\mathbb{R}}^{2n}$ with\ $n=3N$, and focus on a subset of
$N_{A}$ particles with phase space ${\mathbb{R}}^{2n_{A}}$, $n_{A}=3N_{A}$.
This subset is then viewed as a classical open system
\cite{BogoBogo,Bogo,Davies} interacting with its environment. The solutions to
Hamilton's equations for (\ref{Hab}) are generally in no way simply related to
the solutions (\ref{A}) and (\ref{B}) of the uncoupled problem (\ref{A}%
)--(\ref{B}), making their study usually very complicated. For instance, let
$z_{A,0}=(x_{A,0},p_{A,0})$ be an initial condition in $\mathbb{R}^{2n_{A}}$
for the equations (\ref{A}); this point is the projection of $z_{0}%
=(z_{A,0},z_{B,0})$ for any values of $z_{B,0}=(x_{B,0},p_{B,0})$. The point
$z_{0}$, taken as initial datum for the Hamilton equations for $H$, will in
general be projected to infinitely many bifurcating trajectories in
$\mathbb{R}^{2n_{A}}$ all starting from $z_{A,0}$: the solutions to the
equations (\ref{A}) will depend not only on the initial value $z_{A,0}$ but
are parametrized by those, $z_{B,0}$, of the system $B$: any change in the
system $B$ will affect the system $A$. The motion of a subsystem of a
Hamiltonian system is thus usually not Hamiltonian as soon as there are
interactions with its environment. The situation is similar in quantum
mechanics. In this case the Hamiltonian functions $H_{A}$ and $H_{B}$ are
replaced with their quantizations\footnote{The choice of a quantization scheme
is always somewhat arbitrary; to keep things simple we will only use in this
paper the usual Weyl quantization. This has many technical advantages, one of
them being that Weyl quantization is symplectically covariant under
conjugation with metaplectic operators.} $\widehat{H_{A}}=H_{A}(\widehat{x}%
_{A},\widehat{p}_{A})$, $\widehat{H_{B}}=H_{B}(\widehat{x}_{B},\widehat{p}%
_{B})$ and the Hamilton equations (\ref{A}) and (\ref{B}) are replaced with
the corresponding Schr\"{o}dinger equations%
\begin{align}
i\hbar\partial_{t}\psi_{A}  &  =\widehat{H_{A}}\psi_{A}\text{ \ , \ }\psi
_{A}(\cdot,0)=\psi_{A,0}\label{psiAB}\\
i\hbar\partial_{t}\psi_{B}  &  =\widehat{H_{B}}\psi_{B}\text{ \ , \ }\psi
_{B}(\cdot,0)=\psi_{B,0}%
\end{align}
where $\psi_{A}\in L^{2}(\mathbb{R}^{n_{A}})$ and $\psi_{B}\in L^{2}%
(\mathbb{R}^{n_{B}})$ describe the states of the quantized systems $A$ and $B$
at time $t$. As long as the subsystems $A$ and $B$ do not interact, the
evolution of the bipartite system $A\cup B$ is determined by the
Schr\"{o}dinger equation
\begin{equation}
i\hbar\partial_{t}\psi=(\widehat{H_{A}}+\widehat{H_{B}})\psi\text{ \ },\text{
\ }\psi_{0}=\psi_{A,0}\otimes\psi_{B,0}~.
\end{equation}
However, when $A$ and $B$ are allowed to interact, the evolution of $A\cup B$
is described by the complete Schr\"{o}dinger equation%
\begin{equation}
i\hbar\partial_{t}\psi=\widehat{H}\psi\text{ \ , \ }\psi(\cdot,0)=\psi_{0}%
\end{equation}
where the operator
\[
\widehat{H}=\widehat{H_{A}}+\widehat{H_{B}}+\widehat{H}_{\mathrm{inter}}%
\]
is the quantization of the total Hamiltonian (\ref{Hab}) and $\psi_{0}\in
L^{2}(\mathbb{R}^{n})$. Even when the initial wave function $\psi_{0}$ is a
tensor product $\psi_{A,0}\otimes\psi_{B,0}$ the solution $\psi$ at time $t$
will generally not be a tensor product, because the state $|\psi\rangle$ will
be entangled \cite{nemes,Thaller} and of the type
\[
\psi=\sum_{j}c_{j}\psi_{A,j}\otimes\psi_{B,j}%
\]
so it does not make sense to attribute to the subsystem $A$ a pure state; it
will rather evolve into a mixed state.

Let us briefly describe the strategy we will adopt to study the classical and
quantum motion of subsystems. We begin by coarse-graining the total phase
space ${\mathbb{R}}^{2n}$ by balls $B_{R}^{2n}(z_{0})$ with small radius $R$
and center $z_{0}$. The shadow (= orthogonal projection) of this ball on the
phase space $\mathbb{R}^{2n_{A}}$ of the subsystem $A$ is of course a ball
$B_{R}^{2n_{A}}(z_{0,A})$ with the same radius $R$ in $\mathbb{R}^{2n_{A}}$
and centered at the projection $z_{0,A}$ of $z_{0}$. The next step involves
replacing the total Hamiltonian $H$ in (\ref{Hab}) with a local approximation
$H_{0}$ for each $z_{0}$. This local Hamiltonian is obtained as follows: let
$z_{t}$ be the solution of the Hamilton equations for $H$ passing through
$z_{0}$ at time $t=0$. $H_{0}$ is then obtained by truncating the Taylor
series of $H(z)$ at $z=z_{t}$ and retaining only terms of order $\leq2$:
\[
H_{0}(z,t)=H(z_{t})+\nabla_{z}H(z_{t})(z-z_{t})+\frac{1}{2}H^{\prime\prime
}(z_{t})(z-z_{t})^{2}~.
\]
If we take $z_{0}$ as the initial condition for the Hamilton equations for
$H_{0}$. the exact solution is $z_{t}$. If we choose an initial point close to
$z_{0}$ we will get a (hopefully good) approximation to the exact solution
(intuitively the smaller the radius $R$ the better this approximation will
be). This procedure is what Littlejohn \cite{Littlejohn} calls the
\textquotedblleft nearby orbit approximation\textquotedblright; when applied
to the semiclassical case it is a generalization of Heller's thawed Gaussian
approximation \cite{heller,hellerbook}; also see Hepp \cite{Hepp}. Now, the
local Hamiltonian $H_{0}$ is a quadratic polynomial in the phase space
variables, and the corresponding Hamilton equations are thus linear so that
the local Hamiltonian flow $\Phi_{t}$ they generate can be expressed using
only phase space translations and linear symplectic transformations. Taking
for simplicity $z_{0}=0$, after time $t$ this flow $\Phi_{t}$ will thus have
deformed the initial ball $B_{R}^{2n}(0)$ into a phase space ellipsoid
\[
\Omega_{t}=\{z_{t}\}+S_{t}B_{R}^{2n}(0)
\]
where $S_{t}$ is a symplectic matrix in ${\mathbb{R}}^{2n}$. Now comes the
crucial point: very recent results \cite{DiGoPra19} in symplectic geometry
show that the shadow $\Omega_{A,t}$ of $\Omega_{t}$\ on $\mathbb{R}^{2n_{A}}$
is an ellipsoid containing a symplectic ball centered at $z_{A,t}$:
\begin{equation}
\Omega_{A,t}\supset z_{A,t}+S_{A,t}B_{R}^{2n_{A}}(0) \label{formula1}%
\end{equation}
where $S_{A,t}$ is a symplectic matrix in the smaller phase space
${\mathbb{R}}^{2n_{A}}$ and $z_{A,t}$ the projection of $z_{t}$. This striking
(and non-trivial) result follows from a generalization \cite{DiGoPra19} (in
the linear case) of Gromov's famous symplectic non-squeezing theorem
\cite{Gromov}.\ We sketch the proof of this result in Theorem \ref{Thm3}. The
quantum analogue of a symplectic ball is what we have called a
\textquotedblleft quantum blob\textquotedblright\ in earlier work
\cite{blobs}, and the Wigner formalism shows that quantum blobs are in
one-to-one correspondence with pure Gaussian states (sometimes called
generalized squeezed coherent states). This allows us to propagate these
Gaussians semiclassically using a Weyl quantization
\[
\widehat{H}_{0}=H(z_{t})+\nabla_{z}H(z_{t})(\widehat{z}-z_{t})+\frac{1}%
{2}H^{\prime\prime}(z_{t})(\widehat{z}-z_{t})^{2}%
\]
of the local Hamiltonian. The main result is then that a pure Gaussian state
in the subsystem $A$ will evolve into a mixed (Gaussian) which will be
described explicitly in Theorem \ref{ThmSemi} using the semiclassical
propagator which can be expressed using only displacement and metaplectic operators.

\subsubsection{\textbf{Notation and terminology }}

\paragraph{General notation}

The phase space variable will be $z=(z_{A},z_{B})$ with $z_{A}=(x_{A},p_{A})$
and $z_{B}=(x_{B},p_{B})$. We will also use the notation $z=z_{A}\oplus z_{B}$
and
\[
J=J_{A}\oplus J_{B}=%
\begin{pmatrix}
J_{A} & 0\\
0 & J_{B}%
\end{pmatrix}
\]
where $J_{A}$ (\textit{resp.} $J_{B}$) is the standard symplectic matrix on
${\mathbb{R}}^{2n_{A}}$ (\textit{resp}. ${\mathbb{R}}^{2n_{B}}$). They
correspond to the symplectic structures $\sigma(z,z^{\prime})=Jz\cdot
z^{\prime}$, $\sigma_{A}(z_{A},z_{A}^{\prime})=J_{A}z_{A}\cdot z_{A}^{\prime}%
$, and $\sigma_{B}(z_{B},z_{B}^{\prime})=J_{B}z_{B}\cdot z_{B}^{\prime}$,
respectively. The scalar product of two vectors $u,v\in\mathbb{R}^{m}$ is
written $u\cdot v$ or $uv$; if $A$ is a symmetric $m\times m$ matrix we use
the shorthand notation $Au\cdot u=Au^{2}$. The symplectic group of the
symplectic space $({\mathbb{R}}^{2n},\sigma)$ is denoted by
$\operatorname*{Sp}(n)$; similarly the symplectic groups of $({\mathbb{R}%
}^{2n_{A}},\sigma_{A})$ and $({\mathbb{R}}^{2n_{B}},\sigma_{B})$ are
$\operatorname*{Sp}(n_{A})$ and $\operatorname*{Sp}(n_{B})$, respectively.

We denote by $B_{R}^{2n}(z_{0})$ the open ball in $\mathbb{R}^{2n}$ with
radius $R$ and center $z_{0}$:%
\[
B_{R}^{2n}(z_{0})=\{z\in\mathbb{R}^{2n}:|z-z_{0}|<R\}~.
\]
When $z_{0}=0$ we write $B_{R}^{2n}(0)=B_{R}^{2n}$ and
\[
B_{R}^{2n}(z_{0})=\{z_{0}\}+B_{R}^{2n}~.
\]
The volume of $B_{R}^{2n}(z_{0})$ is
\[
\operatorname*{Vol}(B_{R}^{2n}(z_{0}))=\frac{(\pi R^{2})^{n}}{n!}~.
\]
We call the image $S(B_{R}^{2n}(z_{0}))$ of $B_{R}^{2n}(z_{0})$ by a linear
canonical transformation $S\in\operatorname*{Sp}(n)$ a \emph{symplectic ball}
with center $Sz_{0}$ and radius $R$. By Liouville's theorem \cite{Arnold}
$S(B_{R}^{2n}(z_{0}))$ and $B_{R}^{2n}(z_{0})$ have the same volume.

\paragraph{Symplectic terminology}

It is also appropriate to shortly discuss here some points of symplectic
terminology. Strictly speaking, the \textquotedblleft symplectic
camel\textquotedblright\ we refer to in this article is usually called
\textquotedblleft(symplectic) non-squeezing theorem\textquotedblright%
\ \cite{Gromov}. What is called \textquotedblleft symplectic
camel\textquotedblright\ in the symplectic topology community is the following
statement: let $X(r)$ be the open subset of $\mathbb{R}^{2n}$ consisting of
the union of all points $(x_{1},p_{1},...,,x_{1},p_{1})$ with $x_{1}\neq0$ and
the open ball $B_{R}^{2n}$ (it is the complement in $\mathbb{R}^{2n}$ of the
hyperplane $\{x_{1}=0\}$ with a hole of radius $r$ centered at the origin. If
$R>r$ then there is no one-parameter family $(f_{t})$ of symplectomorphisms (=
canonical transformations) of $\mathbb{R}^{2n}$ such that $f_{t}(B_{R}%
^{2n})\subset X(r)$ for every $t\in\lbrack0,1]$, $f_{0}(B_{R}^{2n}%
)\subset\{x_{1}>0\}$, and $f_{1}(B_{R}^{2n})\subset\{x_{1}<0\}$. For a
complete proof, see Allais \cite{Allais}; also \cite{duff}. In the present
paper we will slightly abuse this terminology by referring to the diverse
reformulations of Gromov's non-squeezing theorem as the \textquotedblleft
principle of the symplectic camel\textquotedblright\ \cite{FP,physreps}.

\section{The Extended Symplectic Camel Principle\label{sec1}}

\subsection{Statement and discussion\label{sec12}}

In 1985 the mathematician M. Gromov \cite{Gromov} proved, using the theory of
pseudo-holomorphic curves, the following remarkable and highly non-trivial result:

\begin{theorem}
[Gromov]\label{Thm1}Let $Z_{j}^{2n}(r)$ be the phase space cylinder defined by
$x_{j}^{2}+p_{j}^{2}\leq r^{2}$. There exists a canonical transformation
$\Phi$ of $\mathbb{R}^{2n}$ such that $\Phi(B_{R}^{2n})\subset Z_{j}^{2n}(r)$
if and only $R\leq r$.
\end{theorem}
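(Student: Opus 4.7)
The plan is to split Theorem \ref{Thm1} into its two directions. The ``if'' direction is immediate: when $R \leq r$ take $\Phi = \mathrm{id}$, since any $z \in B_R^{2n}$ satisfies $x_j^2 + p_j^2 \leq |z|^2 < R^2 \leq r^2$, so $z \in Z_j^{2n}(r)$. For the ``only if'' direction my strategy is to organize the argument through symplectic capacities. I would define the Gromov width of an open set $U \subset \mathbb{R}^{2n}$ by $c_G(U) = \sup\{\pi R^2 : B_R^{2n} \hookrightarrow U \text{ symplectically}\}$; this quantity is tautologically monotone under symplectic embeddings and satisfies $c_G(B_R^{2n}) = \pi R^2$. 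A symplectomorphism $\Phi$ with $\Phi(B_R^{2n}) \subset Z_j^{2n}(r)$ is in particular a symplectic embedding of the ball into the cylinder, so the theorem reduces to the single identity $c_G(Z_j^{2n}(r)) = \pi r^2$. The lower bound $c_G(Z_j^{2n}(r)) \geq \pi r^2$ is clear (a radius-$r$ ball fits in the cylinder), so everything turns on the upper bound $c_G(Z_j^{2n}(r)) \leq \pi r^2$.

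For that upper bound I would follow Gromov's pseudo-holomorphic curve route. First I compactify by embedding $Z_j^{2n}(r) \hookrightarrow S^2(r) \times T^{2n-2}_L$, where $S^2(r)$ carries an area form of total area $\pi r^2$ and the torus factor $T^{2n-2}_L$ is chosen large enough that the image of $\Phi(B_R^{2n})$ sits inside a single coordinate chart. Given any symplectic embedding $\Phi: B_R^{2n} \hookrightarrow Z_j^{2n}(r)$ and a point $p \in \Phi(B_R^{2n})$, I would pick an almost complex structure $J$ tamed by the product symplectic form and agreeing on $\Phi(B_R^{2n})$ with the pushforward of the standard integrable complex structure on $B_R^{2n}$. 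Gromov's existence theorem for $J$-holomorphic curves then produces a $J$-holomorphic sphere $u: S^2 \to S^2 \times T^{2n-2}$ passing through $p$ in the homology class $[S^2 \times \{\mathrm{pt}\}]$; its symplectic area equals the area of the $S^2$ factor, namely $\pi r^2$. Pulling the intersection $u(S^2) \cap \Phi(B_R^{2n})$ back through $\Phi^{-1}$ produces an honest holomorphic curve in the standard ball $B_R^{2n}$ passing through the preimage of $p$, and the monotonicity lemma for minimal surfaces forces the area of this curve to be at least $\pi R^2$. Since this area is bounded above by the total area $\pi r^2$ of $u$, we get $\pi R^2 \leq \pi r^2$ and hence $R \leq r$.

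The main obstacle is the existence statement for the $J$-holomorphic sphere, which rests on serious moduli-theoretic analysis. One must establish that the moduli space of $J$-holomorphic spheres through $p$ in the class $[S^2 \times \{\mathrm{pt}\}]$ is nonempty for some generic compatible $J$ and then argue, via Gromov compactness, that nonemptiness persists as $J$ is deformed to the structure adapted to $\Phi$. The key algebraic fact enabling compactness is the indecomposability of the class $[S^2 \times \{\mathrm{pt}\}]$: any bubble would carry a spherical homology class of strictly smaller positive symplectic area, and on $S^2 \times T^{2n-2}$ no such class exists. Handling transversality and bubbling rigorously is what makes Gromov's original proof (and all later variants) substantial; once the sphere is in hand, the monotonicity estimate and the reduction to a capacity comparison are comparatively routine.
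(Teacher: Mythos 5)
The paper contains no proof of Theorem \ref{Thm1}: it is imported wholesale from Gromov \cite{Gromov}, the only surrounding commentary being the remark that it is equivalent to the projection formulation of Theorem \ref{Thm2}. So there is nothing internal to compare against, and what you have supplied is a sketch of the external proof the paper relies on. Your outline is the standard and correct one: the reduction of the ``only if'' direction to the single computation $c_G(Z_j^{2n}(r))=\pi r^2$ for the Gromov width of the cylinder, the compactification into $S^2\times T^{2n-2}$, the existence of a $J$-holomorphic sphere in the class $[S^2\times\{\mathrm{pt}\}]$ through a prescribed point for an almost complex structure adapted to the embedding, and the monotonicity estimate. Two points need tightening. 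First, the monotonicity lemma only yields the full lower bound $\pi R^2$ if the curve passes through the \emph{center} of the ball, so you must take $p=\Phi(0)$ rather than an arbitrary $p\in\Phi(B_R^{2n})$; for a general $p$ the pulled-back curve meets an arbitrary point $q$ and monotonicity gives only $\pi(R-|q|)^2$. Second, the cross-sectional disk of the closed cylinder has area exactly $\pi r^2$, so the compactifying sphere should be given area $\pi r^2+\varepsilon$ and the inequality $R\le r$ obtained in the limit $\varepsilon\to 0$. Beyond these fixable points, the analytic core --- nonemptiness of the moduli space for the product structure, Gromov compactness, and exclusion of bubbling via the absence of spherical classes in $T^{2n-2}$ --- is exactly where the substance lies; you correctly identify it and the indecomposability argument that makes it work, but you do not carry it out, so the proposal is an accurate proof sketch rather than a complete proof. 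That is the same status the theorem has in the paper itself, which simply cites it.
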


It was reformulated by Gromov and Eliashberg \cite{EliGro} in the form we will
use in this paper:

\begin{theorem}
[Symplectic camel]\label{Thm2}Let $\Phi$ be a canonical transformation of
$\mathbb{R}^{2n}$ and $\Pi_{j}$ the orthogonal projection $\mathbb{R}%
^{2n}\longrightarrow\mathbb{R}_{x_{j},p_{j}}^{2}$ on any plane of conjugate
variables $x_{j},p_{j}$. We have
\begin{equation}
\operatorname*{Area}\Pi_{j}(\Phi(B_{R}^{2n}))\geq\pi R^{2}~. \label{area}%
\end{equation}

\end{theorem}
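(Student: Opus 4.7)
The plan is to derive Theorem \ref{Thm2} directly from Gromov's non-squeezing theorem (Theorem \ref{Thm1}) by contradiction, using an area-preserving compression in the conjugate plane as the key trick. Suppose that (\ref{area}) fails and set $K = \Pi_j(\Phi(B_R^{2n}))$, a bounded open subset of $\mathbb{R}^2_{x_j,p_j}$ with $\operatorname{Area}(K)<\pi R^2$. Pick $r$ with $\sqrt{\operatorname{Area}(K)/\pi}<r<R$, so that the open disk $D_r$ of radius $r$ strictly dominates $K$ in area but still satisfies $r<R$. The objective will be to post-compose $\Phi$ with a symplectomorphism of $\mathbb{R}^{2n}$ that squeezes $K$ into $D_r$, since this embeds $B_R^{2n}$ into the cylinder $Z_j^{2n}(r)$ and contradicts Theorem \ref{Thm1}.

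To build this compressing map I would invoke Moser's theorem on area forms (equivalently the Dacorogna--Moser theorem on prescribed Jacobians) to obtain a compactly supported area-preserving diffeomorphism $f\colon\mathbb{R}^2\to\mathbb{R}^2$ with $f(K)\subset D_r$; such an $f$ exists precisely because $\operatorname{Area}(K)<\operatorname{Area}(D_r)$. In dimension two, an area-preserving diffeomorphism is automatically symplectic for $dx_j\wedge dp_j$, so I can extend $f$ to a symplectomorphism $\widetilde{f}$ of $(\mathbb{R}^{2n},\sigma)$ by letting it act as $f$ on the $(x_j,p_j)$-plane and as the identity on every other conjugate pair. The equivariance $\Pi_j\circ\widetilde{f}=f\circ\Pi_j$ then gives
\[
\Pi_j\!\left(\widetilde{f}(\Phi(B_R^{2n}))\right)=f(K)\subset D_r,
\]
so $\widetilde{f}\circ\Phi$ is a canonical transformation of $\mathbb{R}^{2n}$ sending $B_R^{2n}$ into $Z_j^{2n}(r)$, contradicting Theorem \ref{Thm1} since $r<R$.

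The only non-routine ingredient, and hence the main obstacle, is the planar compression step: one must know that any bounded open planar region of area strictly less than $\pi r^2$ can be pushed into $D_r$ by a compactly supported area-preserving diffeomorphism of $\mathbb{R}^2$. This is classical and follows from Moser's trick applied to a smooth interpolation between the given area form and a standard one concentrated on $D_r$; constructing the interpolating isotopy explicitly is the only delicate piece. Everything else is bookkeeping that reduces the projection statement (\ref{area}) to Gromov's non-squeezing theorem via symplectic conjugation.
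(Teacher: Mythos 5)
Your overall strategy coincides with the paper's: the paper gives no formal proof of Theorem \ref{Thm2}, only the one-line remark that it follows from Theorem \ref{Thm1} because any planar domain of area smaller than $\pi R^{2}$ can be mapped into a disk by an area-preserving map. So the reduction to Gromov's theorem via a planar compression in the $(x_{j},p_{j})$-plane, extended by the identity on the remaining conjugate pairs, is exactly the intended argument, and your bookkeeping (the choice of $r$, the equivariance $\Pi_{j}\circ\widetilde{f}=f\circ\Pi_{j}$, the resulting inclusion into $Z_{j}^{2n}(r)$) is correct.

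However, the step you yourself single out as the only non-routine ingredient has a genuine gap as you formulate it. The shadow $K=\Pi_{j}(\Phi(B_{R}^{2n}))$ is connected, open and bounded, but for $n\geq2$ it need \emph{not} be simply connected: the projection of a topological ball in $\mathbb{R}^{2n}$ onto a $2$-plane can be, for instance, a thin annulus of tiny area encircling a large hole. For such a $K$ no compactly supported area-preserving diffeomorphism $f$ of $\mathbb{R}^{2}$ can satisfy $f(K)\subset D_{r}$: being a compactly supported homeomorphism, $f$ carries the bounded complementary components of $K$ onto the bounded complementary components of $f(K)$ while preserving their areas, and once $f(K)\subset D_{r}$ every bounded complementary component of $f(K)$ is forced to lie in $\overline{D_{r}}$ and hence to have area at most $\pi r^{2}$. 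So if $K$ has a hole of area exceeding $\pi r^{2}$, the map you want does not exist, Moser's (or Dacorogna--Moser's) theorem cannot produce it, and your contradiction is not reached in precisely one of the cases that must be excluded. The standard repair is to use an area-preserving \emph{embedding} $f\colon K\hookrightarrow D_{r}$, which does exist for every connected bounded open planar set of area less than $\pi r^{2}$ (area being the only obstruction for planar domains is classical, but it is not a corollary of the compactly supported Moser statement), and correspondingly to invoke Gromov's theorem in its symplectic-embedding form, since $\widetilde{f}\circ\Phi$ is then defined only on $\Phi^{-1}(K\times\mathbb{R}^{2n-2})\supset B_{R}^{2n}$ rather than on all of $\mathbb{R}^{2n}$. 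The paper's own one-sentence justification glosses over the same point.
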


Note that Theorem \ref{Thm2} trivially implies Theorem \ref{Thm1} while the
converse implication follows from the fact that any planar domain of area
smaller than $\pi R^{2}$ can be mapped into a disk of same area by ana
area-preserving diffeomorphism.

These results at first sight seem to contradict the common conception of
Liouville's theorem on volume conservation; they are in fact
\emph{refinements} of it. The PSC has in fact the following dynamical
interpretation: assume that we are moving the ball $B_{R}^{2n}(z_{0})$ through
phase space using some Hamiltonian flow $\Phi_{t}$, which is a one-parameter
family of canonical transformations. In view of Liouville's theorem the
deformed ball $\Phi_{t}(B_{R}^{2n}(z_{0}))$ will have the same volume as
$B_{R}^{2n}(z_{0})$; the principle of the symplectic camel says that in
addition its \textquotedblleft shadow\textquotedblright\ (orthogonal
projection) on any $x_{j},p_{j}$ plane will never decrease below its initial
value $\pi R^{2}$. The result ceases to be true if we move the ball using
non-Hamiltonian volume-preserving flows: it is the symplectic character of
Hamiltonian flows which plays an essential role here. It does not require very
much imagination to realize that the PSC is reminiscent of the quantum
uncertainty principle. In fact, we have shown in \cite{Birk,FP,physreps} that
Heisenberg's uncertainty principle in its strong form (the
Robertson--Schr\"{o}dinger inequalities) can be concisely reformulated using
the PSC, and that this reformulation also extends to the case of classical
uncertainties \cite{jstat}. We mention that Kalogeropoulos \cite{kalo2} has
been able to use the principle of the symplectic camel to study a non-standard
characterization of thermodynamical entropy (also see the related paper
\cite{kalo1}).

The planes of conjugate coordinates $x_{j},p_{j}$ just considered are
particularly simple examples of phase subspaces of $\mathbb{R}^{2n}$ and
correspond to the case $n_{A}=1$ in the notation of the Introduction. A
natural question which arises is whether the PSC can be extended to symplectic
subspaces of higher dimension, \textit{i.e.} to arbitrary phase sub-spaces
$\mathbb{R}^{2n_{A}}$. For general nonlinear canonical transformations the
situation is not yet very well understood (this will be discussed in Section
\ref{secpersp}). However, in a recent work \cite{DiGoPra19} we have proved, in
collaboration with N. Dias and J. Prata, the following refinement of the PSC
for linear canonical transformations:

\begin{theorem}
[Extended symplectic camel]\label{Thm3}Let $\Pi_{A}$ be the orthogonal
projection $\mathbb{R}^{2n_{A}}\times\mathbb{R}^{2n_{B}}\longrightarrow
\mathbb{R}^{2n_{A}}$. There exists $S_{A}\in\operatorname*{Sp}(n_{A})$ such
that for every $R>0$ the projected ellipsoid $\Pi_{A}(S(B_{R}^{2n}))$ contains
the symplectic ball $S_{A}(B_{R}^{2n_{A}})$:%
\begin{equation}
\Pi_{A}(S(B_{R}^{2n}))\supset S_{A}(B_{R}^{2n_{A}}))~. \label{projo}%
\end{equation}
More generally,
\begin{equation}
\Pi_{A}(S(B_{R}^{2n}(z_{0})))\supset\{\Pi_{A}(Sz_{0})\}+S_{A}(B_{R}^{2n_{A}%
})~. \label{proja}%
\end{equation}
We have equality in (\ref{projo}), (\ref{proja}) if $S=S_{A}\oplus S_{B}$ for
some $S_{B}\in\operatorname*{Sp}(n_{B})$.
\end{theorem}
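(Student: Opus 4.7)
The plan is to convert the desired inclusion into a matrix inequality for the positive symmetric symplectic matrix $G=SS^{T}$, to compute the projection as an ellipsoid depending only on the principal block $G_{AA}$, and then to extract $S_{A}$ from a Williamson normal form of that block. The crux will be a lower bound on the Williamson invariants of $G_{AA}$, which I expect to deduce from the positive semidefiniteness of the Hermitian matrix $G+iJ$.

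First, since $S\in\operatorname{Sp}(n)$, the matrix $G=SS^{T}$ is positive definite and symplectic ($GJG=J$), and the symplectic ball rewrites as the sublevel set
\[
S(B_{R}^{2n})=\{z\in\mathbb{R}^{2n}:G^{-1}z\cdot z<R^{2}\}.
\]
To identify its projection I would change variables by $z=G^{1/2}w$, so that the defining condition becomes $|w|<R$ and $\Pi_{A}z=Aw$ where $A$ is the $2n_{A}\times 2n$ matrix formed by the first $2n_{A}$ rows of $G^{1/2}$. Since $AA^{T}=G_{AA}$ is invertible, minimizing $|w|$ over $\{w:Aw=z_{A}\}$ via the Moore--Penrose inverse gives $\min|w|^{2}=G_{AA}^{-1}z_{A}\cdot z_{A}$, so
\[
\Pi_{A}(S(B_{R}^{2n}))=\{z_{A}\in\mathbb{R}^{2n_{A}}:G_{AA}^{-1}z_{A}\cdot z_{A}<R^{2}\}.
\]
The projected ellipsoid is thus governed entirely by the principal $2n_{A}\times 2n_{A}$ block of $G$.

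Next I would apply Williamson's theorem to $G_{AA}$: there exist $L_{A}\in\operatorname{Sp}(n_{A})$ and positive reals $\lambda_{1},\ldots,\lambda_{n_{A}}$ with $G_{AA}=L_{A}^{T}D_{A}L_{A}$, $D_{A}=\operatorname{diag}(\lambda_{1},\ldots,\lambda_{n_{A}},\lambda_{1},\ldots,\lambda_{n_{A}})$. The key lemma, which I regard as the only genuinely non-trivial step and the content of \cite{DiGoPra19}, is that every $\lambda_{j}\geq 1$. I propose to prove it as follows. That $G$ is positive symplectic means its own Williamson invariants all equal $1$, which is equivalent to the Hermitian matrix $G+iJ$ being positive semidefinite (an elementary equivalence checked on Williamson normal form, where one computes directly that $\operatorname{diag}(\mu,\mu)+iJ$ has eigenvalues $\mu_{j}\pm 1$). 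Because $J=J_{A}\oplus J_{B}$, the principal submatrix of $G+iJ$ indexed by the $A$-coordinates is exactly $G_{AA}+iJ_{A}$, and principal submatrices of Hermitian PSD matrices remain PSD. Hence $G_{AA}+iJ_{A}\geq 0$, which by the same equivalence forces $\lambda_{j}\geq 1$ for every $j$.

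With the lemma in hand, set $S_{A}=L_{A}^{T}\in\operatorname{Sp}(n_{A})$ (the transpose of a symplectic matrix being symplectic). The bound $D_{A}\geq I_{2n_{A}}$ yields
\[
S_{A}S_{A}^{T}=L_{A}^{T}L_{A}\leq L_{A}^{T}D_{A}L_{A}=G_{AA},
\]
which inverts to the ellipsoid inclusion $S_{A}(B_{R}^{2n_{A}})\subset\Pi_{A}(S(B_{R}^{2n}))$, giving \eqref{projo}. The translated version \eqref{proja} is immediate from the linearity of $\Pi_{A}\circ S$. Finally, in the split case $S=S_{A}\oplus S_{B}$ one has $S(B_{R}^{2n})=\{(S_{A}z_{A},S_{B}z_{B}):|z_{A}|^{2}+|z_{B}|^{2}\leq R^{2}\}$, and the $z_{B}$-constraint disappears under $\Pi_{A}$, turning the inclusion into the equality $\Pi_{A}(S(B_{R}^{2n}))=S_{A}(B_{R}^{2n_{A}})$. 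The entire difficulty of the argument is therefore concentrated in the positivity-under-restriction of $G+iJ$; the remainder is routine block-matrix and Williamson bookkeeping.
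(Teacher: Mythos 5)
Your proof is correct, and while it shares the overall skeleton of the paper's argument (identify the projection as an ellipsoid, apply Williamson's theorem, bound the symplectic eigenvalues), it differs in two substantive ways. First, you work with $G=SS^{T}$ and its principal block $G_{AA}$, whereas the paper works with $P=G^{-1}$ and the Schur complement $P/P_{BB}$; these are equivalent since $G_{AA}=(P/P_{BB})^{-1}$ by block inversion, and your derivation of the projected ellipsoid via the minimum-norm solution $w=A^{+}z_{A}$ is a clean substitute for the Schur-complement minimization. Second, and more importantly, the paper merely asserts the crucial eigenvalue bound (``one proves that $D_{A}\leq I$'') and defers it to \cite{DiGoPra19}, whereas you supply a complete and genuinely elegant proof: $G=SS^{T}$ has all Williamson invariants equal to $1$, hence $G+iJ=S(I+iJ)S^{T}\geq0$; since $J=J_{A}\oplus J_{B}$ the principal submatrix $G_{AA}+iJ_{A}$ inherits positive semidefiniteness, which forces the invariants of $G_{AA}$ to be $\geq1$ (equivalently those of $P/P_{BB}$ to be $\leq1$). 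This restriction-of-$G+iJ$ argument is exactly the kind of self-contained justification the paper omits, and it has the added conceptual payoff of exhibiting the eigenvalue bound as the hereditary character of the ``quantum condition'' (\ref{quantum}) under passage to subsystems, which ties Theorem \ref{Thm3} directly to the positivity of reduced Gaussian states used later in the paper. The remaining steps (inverting $S_{A}S_{A}^{T}\leq G_{AA}$ to get the inclusion, translating for (\ref{proja}), and checking equality in the split case) are routine and correctly executed.
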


Note that (\ref{proja}) immediately follows from (\ref{projo}) so that it is
sufficient to focus our attention on the proof of the inclusion (\ref{projo}).
The idea of the proof goes as follows (see \cite{DiGoPra19} for details): the
ball $\Omega=S(B_{R}^{2n})$ is determined by the inequality $Pz^{2}\leq R^{2}$
where $P=(SS^{T})^{-1}$. Writing $P$ in block matrix form
\begin{equation}
P=%
\begin{pmatrix}
P_{AA} & P_{AB}\\
P_{BA} & P_{BB}%
\end{pmatrix}
\label{paa}%
\end{equation}
the blocks $P_{AA}$, $P_{AB}$, $P_{BA}$, $P_{BB}$ having dimensions
$2n_{A}\times2n_{A}$, $2n_{A}\times2n_{B}$, $2n_{B}\times2n_{A}$,
$2n_{B}\times2n_{B}$, respectively, the projection $\Omega_{A}=\Pi_{A}%
(S(B_{R}^{2n}))$ on ${\mathbb{R}}^{2n_{A}}$ is then the ellipsoid
\begin{equation}
\Omega_{A}=\{z_{A}:(P/P_{BB})z_{A}^{2}\leq R^{2}\} \label{ppbb}%
\end{equation}
where the $2n_{A}\times2n_{A}$ symmetric and positive definite matrix
$P/P_{BB}$ is the Schur complement of $P_{BB}$ in $P$, that is
\begin{equation}
P/P_{BB}=P_{AA}-P_{AB}P_{BB}^{-1}P_{BA}~. \label{Schur}%
\end{equation}
It follows that $\Omega_{A}$ is a non-degenerate ellipsoid in $\mathbb{R}%
^{2n_{A}}$. In view of Williamson's diagonalization theorem \cite{Birk} there
exists a symplectic matrix $S_{A}\in\operatorname*{Sp}(n_{A})$ diagonalizing
$P/P_{BB}$, that is
\begin{equation}
P/P_{BB}=(S_{A}^{-1})^{T}D_{A}S_{A}^{-1} \label{mmbsa}%
\end{equation}
where $D_{A}$ has the form
\begin{equation}
D_{A}=%
\begin{pmatrix}
\Lambda_{A} & 0\\
0 & \Lambda_{A}%
\end{pmatrix}
\label{da}%
\end{equation}
with $\Lambda_{A}=\operatorname*{diag}(\lambda_{1},...,\lambda_{n_{A}})$, the
positive numbers $\lambda_{j}$ being the symplectic eigenvalues of $P/P_{BB}$
(\textit{i.e.} the positive numbers $\lambda_{j}$ such that $\pm i\lambda_{j}$
is an eigenvalue of $J_{A}(P/P_{BB})$). The symplectic matrix $S_{A}$ in
(\ref{mmbsa}) is the one appearing in (\ref{projo}), (\ref{proja}), and one
proves that $D_{A}\leq I$, that is
\begin{equation}
\lambda_{j}\leq1\text{ \ \textit{for} \ }j=1,...,n_{A}~. \label{lambdaj}%
\end{equation}
The proof of the inclusion (\ref{projo}) now follows: in view of (\ref{mmbsa})
the inequality (\ref{ppbb}) is equivalent to
\[
(S_{A}^{-1})^{T}D_{A}S_{A}^{-1}z_{A}^{2}\leq R^{2}~;
\]
since $D_{A}\leq I$ it implies that the projection $\Omega_{A}$ must contain
the ball $\Omega_{A}=S_{A}(B_{R}^{2n_{A}})$. That the condition $S=S_{A}\oplus
S_{B}$ is sufficient to insure equality in (\ref{projo}) and (\ref{proja}) is
clear. It is however not necessary: Consider for instance the case where $S$
is a symplectic rotation. Then $SJ=JS$ and $S(B_{R}^{2n})=B_{R}^{2n}$ and the
equality in (\ref{projo}) holds if $S_{A}$ is the identity (or, more
generally, a symplectic rotation of $\mathbb{R}^{2n_{A}}$).

Note that since symplectic mappings are volume-preserving we have%
\begin{equation}
\operatorname*{Vol}\Pi_{A}(S(B_{R}^{2n}))\geq\operatorname*{Vol}B_{R}^{2n_{A}%
}~. \label{abbo1}%
\end{equation}
This inequality qualifies Theorem \ref{Thm3} as an extension of the principle
of the symplectic camel: since $S_{A}$ is volume preserving in ${\mathbb{R}%
}^{2n_{A}}$ it reduces to Theorem \ref{Thm2} when $n_{A}=1$ in the linear
case. The inequality (\ref{abbo1}) was in fact proved directly by Abbondandolo
and Matveyev \cite{abbo} some time ago, using methods from linear algebra. We
mention that Abbondandolo and Benedetti \cite{abbonew} have very recently
improved the inequality (\ref{abbo1}) by showing that a similar inequality
still holds for canonical transformations close to linear ones. It is an open
question whether Theorem \ref{Thm3} can be improved to encompass such
transformations (see the discussion in Section \ref{secpersp}).

\subsection{The nearby orbit method}

Consider the generalized time-dependent harmonic oscillator with Hamiltonian%
\begin{equation}
H(z,t)=\frac{1}{2}M(t)z^{2} \label{HM}%
\end{equation}
where $M=M(t)$ is a real symmetric $2n\times2n$ matrix depending continuously
on time $t$. The associated Hamilton equations are linear and hence a solution
$z(t)$ of the associated Hamilton equations satisfies $z(t)=S_{t}z(0)$ with
$S_{t}\in\operatorname*{Sp}(n)$. As an immediate application of the extended
principle of the symplectic camel (formula (\ref{proja}) in Theorem
\ref{Thm3}) there exists a one-parameter family of matrices $S_{A,t}%
\in\operatorname*{Sp}(n_{A})$ such that
\begin{equation}
\Pi_{A}(S_{t}(B_{R}^{2n}(z_{0})))\supset\{\Pi_{A}(S_{t}z_{0})\}+S_{A,t}%
(B_{R}^{2n_{A}})~. \label{pias}%
\end{equation}

Let now $H=H(z,t)$ be an arbitrary (possibly time-dependent) Hamiltonian
function on $\mathbb{R}^{2n}$. We assume $H$ to be at least twice continuously
differentiable in the position and momentum variables and once continuously
differentiable with respect to time $t$. Fixing a reference point $z_{0}$ in
phase space we denote by $z_{t}$ the solution to Hamilton's equation for $H$
with initial datum $z_{0}$ at time $t=0$. We will call the phase space curve
$t\longmapsto z_{t}$ the \textit{reference orbit\footnote{While Hamilton's
equations can usually not be solved exactly, there are efficient numerical
symplectic algorithms allowing to determine the reference orbit $z_{t}$ with
very good precision. See for instance \cite{channel,kang,xue,luo} and the
references therein.}.} We define the flow mapping $\Phi_{t}:\mathbb{R}%
^{2n}\longrightarrow\mathbb{R}^{2n}$ by $z(t)=\Phi_{t}(z(0))$ where $z(t)$ is
the solution of Hamilton's equations for the initial Hamiltonian $H$. The
mappings $\Phi_{t}$ are canonical transformations \cite{Arnold,Birk}. This
follows from the fact that the Jacobian matrix%
\begin{equation}
S_{t}(z_{0})=\partial(x_{t},p_{t})/\partial(x_{0},p_{0}) \label{Jacobian}%
\end{equation}
is symplectic, that is $S_{t}(z_{0})\in\operatorname*{Sp}(n)$ for all times
$t$. This property easily follows from the fact that $S_{t}(z_{0})$ satisfies
the \textquotedblleft variational equation\textquotedblright\ (see
\cite{Birk}, \S 2.3.2)%
\begin{equation}
\frac{d}{dt}S_{t}(z_{0})=JH^{\prime\prime}(z_{t},t)S_{t}(z_{0})
\label{variational}%
\end{equation}
where
\begin{equation}
H^{\prime\prime}=\left(  \partial^{2}H/\partial z_{\alpha}\partial z_{\beta
}\right)  _{1\leq\alpha,\beta\leq2n} \label{Hessian}%
\end{equation}
is the Hessian matrix of $H$, \textit{i.e.} the matrix of second derivatives
of $H$ in the phase space variables $z_{\alpha}=x_{\alpha}$ for $\alpha
=1,...,n$ and $z_{\alpha}=p_{\alpha}$ for $\alpha=n+1,...,2n$.

This leads us to consider the truncated Taylor expansion
\begin{equation}
H_{0}(z,t)=H(z_{t},t)+\nabla_{z}H(z_{t},t)(z-z_{t})+\frac{1}{2}H^{\prime
\prime}(z_{t},t)(z-z_{t})^{2} \label{ho}%
\end{equation}
of the original Hamiltonian $H$ around the point $z_{t}$. This new Hamiltonian
$H_{0}$ is time-dependent, even if $H$ is not. In the particular case where
$H$ has the simple physical form%
\begin{equation}
H(x,p,t)=\frac{1}{2}m^{-1}p^{2}+V(x,t) \label{hphys}%
\end{equation}
($m$ the mass matrix) the approximate Hamiltonian (\ref{ho}) takes the
familiar form \cite{begusic,Littlejohn}
\begin{equation}
H_{0}(x,p,t)=\frac{1}{2}m^{-1}(p-p_{t})^{2}+V_{\mathrm{LHA}}(x,t)
\label{kphys}%
\end{equation}
where $V_{\mathrm{LHA}}(x,t)$ is the time-dependent local harmonic
approximation of the potential $V(x)$:%
\begin{equation}
V_{\mathrm{LHA}}(x,t)=V(x_{t})+\nabla_{x}V(x_{t})(x-x_{t})+\frac{1}%
{2}V^{\prime\prime}(x_{t})(x-x_{t})^{2}~. \label{lha}%
\end{equation}
The solutions to the Hamilton equations for $H$ and $H_{0}$ coincide when the
initial value of $z(t)$ is chosen equal to the reference point $z_{0}$ for
both systems. In fact, the Hamilton equations for $H_{0}$ are
\begin{equation}
\dot{u}_{t}=J\nabla_{z}H(z_{t},t)+JH^{\prime\prime}(z_{t},t)(u_{t}-z_{t})
\label{hameq3}%
\end{equation}
and replacing $u_{t}$ with $z_{t}$ yields $\dot{z}_{t}=J\nabla_{z}H(z_{t},t)$.
Now, the solution of the Hamilton equations (\ref{hameq3}) with initial datum
$u_{0}$ is easily calculated and one finds that
\begin{equation}
u_{t}=z_{t}+S_{t}(z_{0})(u_{0}-z_{0})~. \label{ut}%
\end{equation}
Setting $u_{t}=U_{t}(z_{0})(u_{0})$ and introducing the phase space
translations $T(u):z\longmapsto z+u$ this formula can be rewritten as
\begin{equation}
U_{t}(z_{0})=T(z_{t})S_{t}(z_{0})T(-z_{0}) \label{phaseflow2}%
\end{equation}
so that $U_{t}(z_{0})$ can be viewed as a classical propagator
\cite{Littlejohn}. Recalling that we have set $z(t)=\Phi_{t}(z(0))$ the
discussion above shows that $\Phi_{t}(z)=U_{t}(z_{0})(z)$ when $z=z_{0}$,
which suggests to approximate the flow $\Phi_{t}$ by the affine mappings
$U_{t}(z_{0})$ for points close to $z_{0}$: this is the idea of the
\textquotedblleft nearby orbit approximation\textquotedblright\ with respect
to reference orbit\ $t\longmapsto z_{t}$. The validity of this approximation
can be tested using the standard theory of systems of differential equations
using the equality (\ref{ut}); the Lyapunov exponents are crucial for the
study of the accuracy of the solutions. In the absence of chaotic behavior it
is actually quite good for short times (Miller \cite{Miller}) which makes it
work well for low and or medium resolution electronic spectra. The following
straightforward consequence of the extended symplectic camel principle is new;
it describes the approximate motion of the projection on $\mathbb{R}^{2n_{A}}$
of $\Phi_{t}(B_{R}^{2n}(z_{0}))$:

\begin{theorem}
\label{ThmFund}Let $U_{t}(z_{0})$ be the flow determined by $H$ in the nearby
orbit approximation with reference orbit $t\longmapsto z_{t}$ starting from
$z_{0}$. The orthogonal projection%
\[
\Omega_{A,t}(z_{0})=\Pi_{A}(U_{t}(z_{0})(B_{R}^{2n}(z_{0})))
\]
contains a symplectic ball centered at $z_{A,t}=\Pi_{A}(z_{t})$:%
\begin{equation}
\Omega_{A,t}(z_{0})\supset\{z_{A,t}\}+S_{A,t}(z_{0})(B_{R}^{2n_{A}})
\label{satt}%
\end{equation}
where $S_{A,t}(z_{0})\in\operatorname*{Sp}(n_{A})$.
\end{theorem}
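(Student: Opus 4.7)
The plan is to exhibit the nearby-orbit image $U_{t}(z_{0})(B_{R}^{2n}(z_{0}))$ as an affine image of the standard ball $B_{R}^{2n}$ based at the origin, and then feed the resulting linear symplectic image into the Extended Symplectic Camel (Theorem \ref{Thm3}).

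First, I would unwind the factorization (\ref{phaseflow2}): since $U_{t}(z_{0})=T(z_{t})\,S_{t}(z_{0})\,T(-z_{0})$ and $T(-z_{0})(B_{R}^{2n}(z_{0}))=B_{R}^{2n}$, one obtains
\[
U_{t}(z_{0})(B_{R}^{2n}(z_{0}))=\{z_{t}\}+S_{t}(z_{0})(B_{R}^{2n}),
\]
a translated symplectic ball of radius $R$ centered at $z_{t}$. Since $\Pi_{A}$ is linear it commutes with translations, so
\[
\Omega_{A,t}(z_{0})=\{z_{A,t}\}+\Pi_{A}\!\bigl(S_{t}(z_{0})(B_{R}^{2n})\bigr),
\]
with $z_{A,t}=\Pi_{A}(z_{t})$ as in the statement.

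At this point the theorem is an immediate corollary of Theorem \ref{Thm3}, applied to the symplectic matrix $S=S_{t}(z_{0})\in\operatorname*{Sp}(n)$ produced by the variational equation (\ref{variational}): the inclusion (\ref{projo}) yields a matrix $S_{A,t}(z_{0})\in\operatorname*{Sp}(n_{A})$, constructed via the Schur-complement diagonalization (\ref{Schur})--(\ref{mmbsa}) applied to $P=(S_{t}(z_{0})S_{t}(z_{0})^{T})^{-1}$, such that
\[
\Pi_{A}\!\bigl(S_{t}(z_{0})(B_{R}^{2n})\bigr)\supset S_{A,t}(z_{0})(B_{R}^{2n_{A}}).
\]
Adding the translate $\{z_{A,t}\}$ to both sides produces precisely (\ref{satt}).

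There is essentially no substantive obstacle: once the affine factorization (\ref{phaseflow2}) is in hand, the claim is a direct linear consequence of Theorem \ref{Thm3}, which has already been proved. The only bookkeeping point worth flagging is that $S_{A,t}(z_{0})$ genuinely depends on both $t$ and $z_{0}$, since it is built from the Jacobian $S_{t}(z_{0})$ in (\ref{Jacobian}) along the reference orbit; this dependence, implicit in the notation, should be noted so that the reader does not confuse $S_{A,t}(z_{0})$ with a flow in the subsystem~$A$ (in general the family $\{S_{A,t}(z_{0})\}_{t}$ is \emph{not} a one-parameter subgroup of $\operatorname*{Sp}(n_{A})$, which is precisely the obstruction to the subsystem motion being Hamiltonian).
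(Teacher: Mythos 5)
Your proposal is correct and follows essentially the same route as the paper: unwind the affine factorization $U_{t}(z_{0})=T(z_{t})S_{t}(z_{0})T(-z_{0})$ to write the image of $B_{R}^{2n}(z_{0})$ as $\{z_{t}\}+S_{t}(z_{0})(B_{R}^{2n})$, use linearity of $\Pi_{A}$ to pull out the translation, and invoke the Extended Symplectic Camel (Theorem \ref{Thm3}, via the inclusion (\ref{pias})) for the linear part. Your closing remark about the $t$- and $z_{0}$-dependence of $S_{A,t}(z_{0})$ and the non-group structure of the family is a useful observation that the paper only addresses in the discussion following the theorem.
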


\begin{proof}
Since $T(-z_{0})B_{R}^{2n}(z_{0})=B_{R}^{2n_{A}}$ formula (\ref{phaseflow2})
yields
\begin{align*}
U_{t}(z_{0})(B_{R}^{2n}(z_{0}))  &  =T(z_{t})S_{t}(z_{0})(B^{2n}(0,R))\\
&  =\{z_{t}\}+S_{t}(z_{0})(B_{R}^{2n_{A})}~.
\end{align*}
Formula (\ref{satt}) now follows from the inclusion (\ref{pias}) taking into
account the linearity of the projection $\Pi_{A}$.
\end{proof}

Formula (\ref{satt}) implies the following important property of the subsystem
$A$: while the motion of the \textquotedblleft shadow\textquotedblright\ of
the initial ball $B_{R}^{2n}(z_{0})$ on $\mathbb{R}^{2n_{A}}$ cannot be
Hamiltonian (it is not volume preserving), it however contains a symplectic
ball whose evolution is governed by a Hamiltonian flow, namely that determined
by \
\begin{multline*}
H_{A}(z_{A},t)=-\frac{1}{2}J_{A}\dot{S}_{A,t}(z_{0})S_{A,t}(z_{0})^{-1}%
(z_{A}-z_{A,t})^{2}\\
+J_{A}z_{A}\cdot\dot{z}_{A,t}%
\end{multline*}
where $\dot{S}_{A,t}(z_{0})=\frac{d}{dt}S_{A,t}(z_{0})$. In fact,
\begin{multline*}
J_{A}\nabla_{z_{A}}H_{A}(z_{A,t},t)=\\
\dot{S}_{A,t}(z_{0})S_{A,t}(z_{0})^{-1}(z_{A}-z_{A,t})+\dot{z}_{A,t}%
\end{multline*}
and hence the solution $z_{A}(t)$ of the Hamilton equations for $H_{A}$
satisfies the linear differential equation
\begin{multline*}
\frac{d}{dt}(z_{A}(t)-z_{A,t})=\\
\dot{S}_{A,t}(z_{0})S_{A,t}(z_{0})^{-1}(z_{A}(t)-z_{A,t})~.
\end{multline*}
The solution of this equation is $z_{A}(t)=z_{A,t}+S_{A,t}(z_{0})$ and hence
our claim\footnote{On a more fundamental level this result is a consequence of
the fact that any smooth family $S_{t}$ of symplectic matrices such that
$S_{0}=I$ is the flow determined by some quadratic Hamiltonian \cite{RMP}.}.

\subsection{Entropy increase in subsystems\label{secentropy}}

In the nearby orbit approximation the orthogonal projection $\Omega
_{A,t}(z_{0})$ of $U_{t}(z_{0})(B_{R}^{2n}(z_{0}))$ is the ellipsoid defined
by the inequality
\begin{equation}
(P_{t}(z_{0})/P_{BB,t}(z_{0}))(z_{A}-z_{A,t})^{2}\leq R^{2} \label{sato}%
\end{equation}
where we have written the symplectic matrix
\[
P_{t}(z_{0})=(S_{A,t}(z_{0})S_{A,t}(z_{0})^{T})^{-1}%
\]
in block matrix form
\begin{equation}
P_{t}(z_{0})=%
\begin{pmatrix}
P_{AA,t}(z_{0}) & P_{AB,t}(z_{0})\\
P_{BA,t}(z_{0}) & P_{BB,t}(z_{0})
\end{pmatrix}
~. \label{blockp}%
\end{equation}
The volume of $\Omega_{A,t}(z_{0})$ is thus%
\[
\operatorname*{Vol}\Omega_{A,t}(z_{0})=\det(P_{t}(z_{0})/P_{BB,t}%
(z_{0}))\operatorname*{Vol}B_{R}^{2n_{A}}~.
\]
Recalling that the Schur complement satisfies the relation \cite{Zhang}:
\begin{equation}
\det P_{t}(z_{0})=\det(P_{t}(z_{0})/P_{BB,t}(z_{0}))\det P_{BB,t}(z_{0})~;
\label{Schur2}%
\end{equation}
in the present case $P_{t}(z_{0})$ is symplectic so that $\det P_{t}(z_{0}%
)=1$, and we thus have%
\begin{equation}
\det(P_{t}(z_{0})/P_{BB,t}(z_{0}))\det P_{BB,t}(z_{0})=1 \label{Schur3}%
\end{equation}
and hence%
\begin{equation}
\operatorname*{Vol}\Omega_{A,t}(z_{0})=\frac{1}{\det P_{BB,t}(z_{0}%
)}\operatorname*{Vol}B_{R}^{2n_{A}}~. \label{volate}%
\end{equation}
The entropy increase\footnote{That the entropy increases already follows from
the estimates in Abbondandolo and Matveyev \cite{abbo} on the volume of the
projection of a symplectic ball.} is thus%
\begin{equation}
\Delta\mathcal{S}=-k_{\mathrm{B}}\ln(\det P_{BB,t}(z_{0})) \label{delta}%
\end{equation}
where $\ln$ is the natural logarithm. This increase is due only to the
subsystem $B$. If $A$ and $B$ are uncoupled, then the cross-term
$P_{AB,t}(z_{0})=P_{BA,t}(z_{0})^{T}=0$ so that both $P_{AA,t}(z_{0})$ and
$P_{BB,t}(z_{0})$ are both symplectic and we have
\[
P_{t}(z_{0})=P_{AA,t}(z_{0})\oplus P_{BB,t}(z_{0})
\]
so that $\operatorname*{Vol}\Omega_{A,t}=\operatorname*{Vol}B_{R}^{2n_{A}}$
remains constant. This phenomenon shows the hardly surprising fact that the
Boltzmann entropy of the subsystem $A$ can take arbitrarily large values as
the subsystem $A$ occupies an increasing volume of phase space due to
interaction with the subsystem $B$.

There is another way to express the results above using the symplectic
eigenvalues $\lambda_{j,t}(z_{0})$ of $P_{BB,t}(z_{0})$. In view of formula
(\ref{mmbsa}) we have%
\[
P_{t}(z_{0})/P_{BB,t}(z_{0})=(S_{A,t}(z_{0})^{-1})^{T}D_{A,t}(z_{0}%
)S_{A,t}(z_{0})^{-1}%
\]
with
\begin{align}
D_{A,t}(z_{0})  &  =%
\begin{pmatrix}
\Lambda_{A,t}(z_{0}) & 0\\
0 & \Lambda_{A,t}(z_{0})
\end{pmatrix}
\label{date}\\
\Lambda_{A,t}(z_{0})  &  =\operatorname*{diag}(\lambda_{1,t}(z_{0}%
),...,\lambda_{n_{A},t}(z_{0}))~.
\end{align}
It follows that the volume of $\Omega_{A,t}$ is
\begin{align*}
\operatorname*{Vol}\Omega_{A,t}(z_{0})  &  =\det(D_{A,t}(z_{0}))^{-1/2}%
\operatorname*{Vol}B_{R}^{2n_{A}}\\
&  =\frac{1}{\lambda_{1,t}(z_{0})\cdot\cdot\cdot\lambda_{n_{A},t}(z_{0}%
)}\operatorname*{Vol}B_{R}^{2n_{A}}~.
\end{align*}
This volume can become arbitrarily large regardless of the dimension $n_{B}$
of the system $B$. For this it suffices that at least one of the symplectic
eigenvalues $\lambda_{j,t}(z_{0})$ is sufficiently small. The entropy increase
can thus be expressed as
\begin{equation}
\Delta\mathcal{S}=-k_{\mathrm{B}}\sum_{j=1}^{n_{A}}\ln\lambda_{j,t}(z_{0})~.
\label{entropy}%
\end{equation}

\subsection{Entropy and symplectic capacities}

One should however be aware of the fact that symplectic topology teaches us,
\textit{via }Gromov's non-squeezing theorem, that in Hamiltonian dynamics the
true measure of spreading is not volume, but \textit{symplectic capacity
}\cite{FP,physreps}. Symplectic capacities were introduced by Ekeland and
Hofer \cite{ekhof1,ekhof2}. A (normalized) symplectic capacity on
$(\mathbb{R}^{2n},\sigma)$ associates to every subset $\Omega$ of
$\mathbb{R}^{2n}$ a number $c(\Omega)\in\mathbb{R}_{+}\cup\{+\infty\}$ such
that the following properties hold \cite{ekhof1,ekhof2}:

\begin{itemize}
\item \textit{Monotonicity}: If $\Omega\subset\Omega^{\prime}$ then
$c(\Omega)\leq c(\Omega^{\prime})$;

\item \textit{Conformality}: For every real scalar $\lambda$ we have
$c(\lambda\Omega)=\lambda^{2}c(\Omega)$;

\item \textit{Symplectic invariance}: We have $c(f(\Omega))=c(\Omega)$ for
every canonical transformation $f$ of $\mathbb{R}^{2n}$;

\item \textit{Normalization}: We have
\begin{equation}
c(B_{R}^{2n})=\pi R^{2}=c(Z_{j,R}^{2n}) \label{cbz}%
\end{equation}
where $Z_{j,R}^{2n}$ is the cylinder $\{(x,p):x_{j}^{2}+p_{j}^{2}\leq R^{2}\}$.
\end{itemize}

That symplectic capacities exist is a consequence of Gromov's non-squeezing
theorem. The symplectic capacities $c_{\min}$ and $c_{\max}$ are defined by%
\begin{subequations}
\begin{align}
c_{\min}(\Omega)  &  =\sup_{f}\{\pi R^{2}:f(B^{2n}(R))\subset\Omega
\}\label{cmin}\\
c_{\max}(\Omega)  &  =\inf_{f}\{\pi R^{2}:f(\Omega)\subset Z_{j}^{2n}(R)
\label{cmax}%
\end{align}
where $f$ ranges over the set of all canonical transformations of
$\mathbb{R}^{2n}$. The symplectic capacity $c_{\min}$ is called the
\textquotedblleft Gromov width\textquotedblright\ while $c_{\max}$ is the
\textquotedblleft cylindrical capacity\textquotedblright. The notation is
motivated by the fact that every symplectic capacity $c$ on $(\mathbb{R}%
_{z}^{2n},\omega)$ is such that
\end{subequations}
\begin{equation}
c_{\min}(\Omega)\leq c(\Omega)\leq c_{\max}(\Omega) \label{cminmax}%
\end{equation}
for all $\Omega\subset\mathbb{R}^{2n}$. One also uses the \textit{linear
symplectic capacities}%
\begin{subequations}
\begin{align}
c_{\min}^{\mathrm{lin}}(\Omega)  &  =\sup_{S\in\operatorname*{Sp}(n)}\{\pi
R^{2}:f(B^{2n}(R))\subset\Omega\}\\
c_{\max}^{\mathrm{lin}}(\Omega)  &  =\inf_{S\in\operatorname*{Sp}(n)}\{\pi
R^{2}:f(\Omega)\subset Z_{j}^{2n}(R)
\end{align}
The symplectic capacity of an unbounded set can be finite (this is the case
for any unbounded set $\Omega$ such that $B_{R}^{2n}\subset\Omega\subset
Z_{j,R}^{2n})$, which shows that the notion of symplectic capacity is very
different from that of volume (except in the case $n=1$ where it is
essentially an \textit{area}: see our discussion in \cite{FP,physreps}).A
remarkable property is that all symplectic capacities agree on ellipsoids: if
\end{subequations}
\begin{equation}
\Omega=\{z\in\mathbb{R}^{2n}:Mz^{2}\leq R^{2}\} \label{CM}%
\end{equation}
where $M=M^{T}>0$, then for every symplectic capacity $c$ on $(\mathbb{R}%
^{2n},\sigma)$ we have
\begin{equation}
c(\Omega)=c_{\min}^{\mathrm{lin}}(\Omega)=c_{\max}^{\mathrm{lin}}(\Omega)=\pi
R^{2}/\lambda_{\max} \label{capomega}%
\end{equation}
where $\lambda_{\max}^{\sigma}$ is the largest symplectic eigenvalue of $M$.
The symplectic eigenvalues $\lambda_{1}^{\sigma},...,\lambda_{n}^{\sigma}$ of
$M$ are the numbers $\lambda_{j}^{\sigma}>0$ defined by the condition
\textquotedblleft\ $\pm i\lambda_{j}^{\sigma}$ is an eigenvalue of $JM$
\textquotedblright. Notice that $c(\Omega)$ does not depend on the dimension
of the ambient phase space, it is thus an \emph{extrinsic quantity}, as
opposed to volume.

Let us discuss our results on entropy from this new point of view. To
illustrate this, let us calculate $c(\Omega_{A,t})$. In view of formula
(\ref{satt}) and taking into account the translational invariance \cite{Birk}
of the symplectic capacities formula (\ref{capomega}) yields%
\[
c(\Omega_{A,t})=\pi R^{2}/\lambda_{\max,t}(z_{0})
\]
for every symplectic capacity $c$ where
\[
\lambda_{\max,t}(z_{0})=\max\nolimits_{j}\{\lambda_{j,t}(z_{0}),1\leq j\leq
n_{A}\}~.
\]

\section{The Semiclassical Case}

We denote by $\widehat{T}(z_{0})=e^{-i\sigma(\widehat{z},z_{0})/\hbar}$ the
Heisenberg displacement operator on $L^{2}(\mathbb{R}^{n})$. It is explicitly
given by the formula \cite{Birk,Littlejohn}%
\begin{equation}
\widehat{T}(z_{0})\psi(x)=e^{i(p_{0}x-p_{0}x_{0}/2)/\hbar}\psi(x-x_{0})~.
\label{heiwe}%
\end{equation}
We recall\ \cite{Birk,RMP,Littlejohn} that the symplectic group
$\operatorname*{Sp}(n)$ has a two-fold covering whose elements are unitary
operators acting on $L^{2}(\mathbb{R}^{n})$. This group is called the
metaplectic group $\operatorname*{Mp}(n)$. To every $S\in\operatorname*{Sp}%
(n)$ corresponds exactly two metaplectic operators $\pm\widehat{S}%
\in\operatorname*{Mp}(n)$ and we have the following intertwining property:
\begin{equation}
\widehat{T}(z)\widehat{S}=\widehat{S}\widehat{T}(S^{-1}z) \label{symco}%
\end{equation}
which is the analogue at the operator level of the obvious relation
\[
T(z)S=ST(S^{-1}z)~.
\]

\subsection{A generalization of the thawed Gaussian approximation}

We now consider the quantized version $\widehat{H}$ of the Hamiltonian $H$. We
will approximate $\widehat{H}$ by quantizing the nearby-orbit Hamiltonian
(\ref{ho}), which yields the operator
\begin{equation}
\widehat{H}_{0}=H(z_{t},t)+\nabla_{z}H(z_{t},t)(\widehat{z}-z_{t})+\tfrac
{1}{2}H^{\prime\prime}(z_{t},t)(\widehat{z}-z_{t})^{2}%
\end{equation}
where $\widehat{z}=(\widehat{x},\widehat{p})$ with $\widehat{x}=(\widehat{x}%
_{1},...,\widehat{x}_{n})$ ($\widehat{x}_{j}$ multiplication by $x_{j}$) and
$\widehat{p}=-i\hbar\nabla_{x}$. When $H$ has the simple physical form
(\ref{hphys}) this reduces to the simple operator
\begin{equation}
\widehat{H}_{0}=\frac{1}{2}M^{-1}(\widehat{p}-p_{t})^{2}+V_{\mathrm{LHA}%
}(\widehat{x},t) \label{klha}%
\end{equation}
where $V_{\mathrm{LHA}}(x,t)$ is given by (\ref{lha}). Let now $\psi_{0}$ be a
wavepacket (for instance, but not necessarily, a Gaussian). Assuming that
$\psi$ is well-localized around $x_{0}$ and its Fourier
transform\footnote{This supplementary condition is often forgotten in
practice; it is equivalent to saying that the Wigner transform of $\psi$ is
concentrated near $z_{0}=(x_{0},p_{0})$.} around $p_{0}$ one postulates that a
good approximation to the solution of the full Schr\"{o}dinger equation
\[
i\hbar\partial_{t}\psi=\widehat{H}\psi\text{ \ },\text{ \ }\psi(\cdot
,0)=\psi_{0}%
\]
is obtained by replacing $\widehat{H}$ with its approximation $\widehat{H_{0}%
}$. It is not difficult (Littlejohn \cite{Littlejohn}, \S 7) to see that the
approximate solution is then given by the quantum analogue of the equivalent
formulas (\ref{ut}) and (\ref{phaseflow2}):%
\begin{equation}
\psi(x,t)=e^{\frac{i}{\hbar}\gamma(t)}\widehat{T}(z_{t})\widehat{S}_{t}%
(z_{0})\widehat{T}(-z_{0})\psi_{0}(x) \label{st}%
\end{equation}
where $\gamma(t)$ is a phase correction, and $\widehat{S}_{t}(z_{0})$ the
metaplectic lift of the path $S_{t}(z_{0})$. We will call the mapping
\begin{equation}
\widehat{U}_{t}(z_{0})=e^{\frac{i}{\hbar}\gamma(t)}\widehat{T}(z_{t}%
)\widehat{S}_{t}(z_{0})\widehat{T}(-z_{0}) \label{stu}%
\end{equation}
the \textit{semiclassical propagator relative to the reference orbit} $z_{t}$.
This formula is interpreted as follows \cite{Birk,RMP}: let $S_{t}$ be the
phase space flow determined by a quadratic Hamiltonian of the type (\ref{HM}),
that is $H(z,t)=\frac{1}{2}M(t)z^{2}$. According to general principles from
the theory of covering spaces, this one-parameter family of symplectic
matrices can be lifted in a unique way to a one-parameter family of operators
$\widehat{S}_{t}$ in $\operatorname*{Mp}(n)$ such that $\widehat{S}_{0}=I_{d}%
$. The lifting is constructed as follows: since $\operatorname*{Mp}(n)$ is a
double covering of $\operatorname*{Sp}(n)$ to each symplectic matrix $S_{t}$
corresponds two metaplectic operators; for each time $t$ one then chooses the
operator leading to a continuous path $\widehat{S}_{t}$ passing through the
identity of $\operatorname*{Mp}(n)$ at time $t=0$; for details of the
construction, see \cite{Birk}, especially \S 7.2.2. This being done, one then
shows (\textit{ibid.}) that for any square integrable initial wavepacket
$\psi_{0}$ the function $\psi(x,t)=\widehat{S}_{t}\psi_{0}(x)$ is a solution
of the Schr\"{o}dinger equation with Hamiltonian operator $\widehat{H}%
=\frac{1}{2}M(t)\widehat{z}^{2}$. Formula (\ref{stu}) follows applying that
lifting principle to the classical flow (\ref{phaseflow2}) noting that
following a similar argument the one-parameter family $T(z_{t})$ (which is the
flow of the displacement Hamiltonian $H_{t}(z)=px_{t}-p_{t}x$) lifts to the
one-parameter family $\widehat{T}(z_{t})$ (which is the propagator for the
Hamiltonian operator $\widehat{H}_{t}=\widehat{p}x_{t}-p_{t}\widehat{x}$).

As already noted by Littlejohn \cite{Littlejohn} this construction is
essentially that of Heller and his collaborators
\cite{heller,he81,he05,hesuta82,he86} (also see Heller's recent monograph
\cite{hellerbook}) known as the \textquotedblleft thawed Gaussian
approximation\textquotedblright\ when the Hamiltonian is of the physical type
\textquotedblleft kinetic energy plus potential\textquotedblright. The
difference is that Heller built the time evolution into the parameters of a
Gaussian wave packet, while we have placed it into the operator, which has the
advantage that the initial wavefunction need not be Gaussian, and allows much
more general Hamiltonians. Such approximations (and their generalizations to
higher orders) have been extensively studied in physics and mathematics; a
non-exhaustive list of related papers is
\cite{CoNi,comb2,dahe84,grhe02,hado1,hado2,latr14,lu,Schubert,Straeng}. They
have applications to on-the-fly \textit{ab initio} semiclassical calculations
of molecular spectra \cite{begusic,wehrle1,wehrle2}; also see the recent paper
\cite{patoz} by Patoz \textit{et al}. For a very recent and up-to-date survey
with applications to computable algorithms see Lasser and Lubich
\cite{Lasser}; their paper in addition contains rigorous error estimates.

We mention that in a recent work \cite{begusic} Begu\v{s}ic \textit{et al}.
have considered a simplified variant of Heller's approximation obtained by
\textquotedblleft freezing\textquotedblright\ the Hessian $V^{\prime\prime
}(x_{t})$ of the potential in (\ref{lha}) at the initial position $x_{0}$; the
authors argue that this decreases the computational complexity, but at the
cost of a loss of accuracy. In our generalized context this would amount to
replacing the approximate Hamiltonian $H$ in (\ref{ho}) with
\begin{multline}
K_{0}(z,t)=H(z_{t},t)+\nabla_{z}H(z_{t},t)(z-z_{t})\\
+\frac{1}{2}H^{\prime\prime}(z_{0},t)(z-z_{t})^{2}%
\end{multline}
and the use of its quantized version $\widehat{K}_{0}$ to propagate
wavepackets. Our arguments still apply \textit{mutatis mutandis} if one uses
$K_{0}$ instead of $H_{0}$ since $K_{0}$ also is quadratic in the position and
momentum variables and thus allows the use of the metaplectic machinery.

\subsection{Gaussian mixed states and the Wigner ellipsoid}

It is well-known that there is a one-to-one correspondence between minimum
uncertainty phase space ellipsoids and Gaussian states; perhaps one of the
first systematic studies of this correspondence is Littlejohn's seminal paper
\cite{Littlejohn}. We have exploited this property in
\cite{Birk,FP,jstat,physreps} using the properties of the Wigner transform.
Working in canonical global coordinates $z=(x,p)$ the Wigner transform of
$\psi\in L^{2}(\mathbb{R}^{n})$ is, by definition,
\begin{equation}
W\psi(x,p)=\left(  \tfrac{1}{2\pi\hbar}\right)  ^{n}\int_{\mathbb{R}^{n}%
}e^{-\tfrac{i}{\hbar}p\cdot y}\psi(x+\tfrac{1}{2}y)\psi^{\ast}(x-\tfrac{1}%
{2}y)dy~. \label{wimo}%
\end{equation}
It satisfies the following transformation formulas
\cite{Birk,WIGNER,Littlejohn}%
\begin{align}
W(\widehat{S}\psi)(z)  &  =W\psi(S^{-1}z)\label{cov1}\\
W(\widehat{T}(z_{0})\psi)(z)  &  =W\psi(z-z_{0})~. \label{cov2}%
\end{align}

Here is a simple but fundamental example: let $\psi=\phi_{0}$ be the
standard\footnote{Littlejohn \cite{Littlejohn} calls it the \textquotedblleft
fiducial coherent state\textquotedblright.} centered coherent state:%
\begin{equation}
\phi_{0}(x)=(\pi\hbar)^{-n/4}e^{-|x|^{2}/2\hbar}~; \label{costate1}%
\end{equation}
its Wigner transform is the Gaussian%
\begin{equation}
W\phi_{0}(z)=(\pi\hbar)^{-n}e^{-|z|^{2}/\hbar}~. \label{wigcostate1}%
\end{equation}
More generally, the standard coherent state centered at $z_{0}=(x_{0},p_{0})$
is $\phi_{z_{0}}=\widehat{T}(z_{0})\phi_{0}$; explicitly
\[
\phi_{z_{0}}(x)=e^{-ip_{0}x_{0}/2\hbar}e^{ip_{0}x/\hbar}\phi_{0}(x-x_{0})
\]
and its Wigner transform is, using (\ref{cov2}),
\[
W\phi_{z_{0}}(z)=(\pi\hbar)^{-n}e^{-|z-z_{0}|^{2}/\hbar}~.
\]
More generally consider (normalized) generalized Gaussian states $|\psi
_{X,Y}\rangle$ with
\begin{equation}
\psi_{X,Y}(x)=\left(  \tfrac{1}{\pi\hbar}\right)  ^{n/4}(\det X)^{1/4}%
e^{-\tfrac{1}{2\hbar}Mx^{2}} \label{squeezed}%
\end{equation}
where $M=X+iY$, $X$ and $Y$ real symmetric, $X$ positive definite. These
states generalize the \textquotedblleft squeezed states\textquotedblright%
\ familiar from quantum optics, the \textquotedblleft squeezing
parameters\textquotedblright\ being the eigenvalues of $X$. The Wigner
transform of $\psi_{X,Y}$ is explicitly given by
\cite{Folland,Birk,Littlejohn}%
\begin{equation}
W\psi_{X,Y}(z)=\left(  \tfrac{1}{\pi\hbar}\right)  ^{n}e^{-\tfrac{1}{\hbar
}Gz^{2}} \label{ouipsi}%
\end{equation}
where $G$ is a symmetric \emph{symplectic} matrix:
\begin{equation}
G=R^{T}R\text{ \ , \ }R=%
\begin{pmatrix}
X^{1/2} & 0\\
X^{-1/2}Y & X^{-1/2}%
\end{pmatrix}
~. \label{gsym}%
\end{equation}
The \textquotedblleft covariance (or Wigner) ellipsoid\textquotedblright%
\ \cite{Littlejohn}
\[
\Omega_{\Sigma}=\{z:Gz^{2}\leq\hbar\}
\]
of $\psi_{X,Y}$ is the symplectic ball:%
\[
\Omega_{\Sigma}=S(B_{\sqrt{\hbar}}^{2n})\text{ \ },\text{ \ }S=R^{-1}%
\in\operatorname*{Sp}(n)~.
\]
Conversely, if an ellipsoid $\Omega$ is a symplectic ball $SB_{\sqrt{\hbar}%
}^{2n}$, then the function $\rho(z)=(\pi\hbar)^{-n}e^{-Gz^{2}/\hbar}$ with
$G=(S^{-1})^{T}S^{-1}$ is the Wigner transform of a Gaussian (\ref{squeezed})
up to an unessential prefactor with modulus one.

Let us consider more general phase space Gaussians%
\begin{equation}
\rho(z)=(\pi\hbar)^{-n}(\det M)^{1/2}e^{-\frac{1}{\hbar}Mz^{2}}~. \label{rhum}%
\end{equation}
Defining the covariance matrix of $\rho$ by
\begin{equation}
\Sigma=\frac{\hbar}{2}M^{-1} \label{covmatrix}%
\end{equation}
we can rewrite (\ref{rhum}) in the perhaps more familiar form%
\begin{equation}
\rho(z)=\frac{1}{(2\pi)^{n}\sqrt{\det\Sigma}}e^{-\frac{1}{2}\Sigma^{-1}z^{2}%
}~. \label{Gauss}%
\end{equation}
A fundamental result in harmonic analysis
\cite{Arvind,go03,Birk,Gauss1,Gauss2,Narcow89} is now that $\rho(z)$ is the
Wigner distribution of a mixed quantum state if and only if the
\textquotedblleft quantum condition\footnote{It is actually an equivalent form
of the Robertson--Schr\"{o}dinger inequalities \cite{FP,physreps}%
.}\textquotedblright%
\begin{equation}
\Sigma+\frac{i\hbar}{2}J\geq0\Longleftrightarrow M^{-1}+iJ\geq0
\label{quantum}%
\end{equation}
holds (\textquotedblleft$\geq0$\textquotedblright\ means \textquotedblleft is
positive semidefinite\textquotedblright; note that the eigenvalues of
$\Sigma+\frac{i\hbar}{2}J$ are real since $(iJ)^{\ast}=-iJ^{T}=iJ$). When
(\ref{quantum}) holds, the purity of the Gaussian state $\widehat{\rho}$ with
Wigner distribution (\ref{Gauss}) is (\cite{do} and \cite{Birk}, \S 9.3,
p.301)%
\begin{equation}
\mu(\widehat{\rho})=\left(  \tfrac{\hbar}{2}\right)  ^{n}(\det\Sigma
)^{-1/2}=\sqrt{\det M}~; \label{purity}%
\end{equation}
it follows that $\widehat{\rho}$ is a pure state $\psi_{M}$ if and only if
$\det M=1$. Defining the Wigner ellipsoid \cite{Littlejohn} of $\widehat{\rho
}$ by
\[
\Omega_{\Sigma}=\{z:\tfrac{1}{2}\Sigma^{-1}z^{2}\leq1\}=\{z:Mz^{2}\leq
\hbar\dot{\}}%
\]
we have the following important geometric reformulation of the quantum
condition (\ref{quantum}):

\begin{theorem}
[Wigner ellipsoid]\label{Thm4}The quantum condition $\Sigma+\frac{i\hbar}%
{2}J\geq0$ is satisfied if and only if $\Omega_{\Sigma}$ contains a symplectic
ball $S(B_{\sqrt{\hbar}}^{2n})$; equivalently
\begin{equation}
c(\Omega_{\Sigma})\geq\pi\hbar\label{capom}%
\end{equation}
where $c(\Omega_{\Sigma})$ is the symplectic capacity (\ref{capomega}) of the
covariance ellipsoid.
\end{theorem}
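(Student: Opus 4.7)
The plan is to reduce all three conditions to a single scalar inequality on the symplectic eigenvalues of $\Sigma$ (equivalently, of $M=\tfrac{\hbar}{2}\Sigma^{-1}$) by bringing the ellipsoid into normal form via Williamson's diagonalization theorem. Write $\Sigma=SDS^{T}$ with $S\in\operatorname*{Sp}(n)$ and $D=\operatorname*{diag}(\Lambda,\Lambda)$, $\Lambda=\operatorname*{diag}(\sigma_{1},\ldots,\sigma_{n})$, so that the $\sigma_{j}>0$ are the symplectic eigenvalues of $\Sigma$. Correspondingly $M=(S^{-1})^{T}(\tfrac{\hbar}{2}D^{-1})S^{-1}$, and the symplectic eigenvalues of $M$ are $\mu_{j}=\hbar/(2\sigma_{j})$.

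First I would handle the quantum condition. Since $S$ is symplectic, $S^{T}JS=J$, so conjugating by $\operatorname*{diag}(S,S)$ preserves both positivity and the form of $\frac{i\hbar}{2}J$. Thus $\Sigma+\frac{i\hbar}{2}J\geq 0$ is equivalent to $D+\frac{i\hbar}{2}J\geq 0$. After a permutation of coordinates that groups $(x_{j},p_{j})$ together, the left side becomes a block-diagonal matrix whose $j$-th $2\times 2$ block is $\bigl(\begin{smallmatrix}\sigma_{j} & i\hbar/2\\ -i\hbar/2 & \sigma_{j}\end{smallmatrix}\bigr)$. Each such block is positive semidefinite iff its eigenvalues $\sigma_{j}\pm \hbar/2$ are non-negative, i.e.\ iff $\sigma_{j}\geq \hbar/2$. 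Hence condition (i) is equivalent to $\sigma_{\min}\geq\hbar/2$, where $\sigma_{\min}=\min_{j}\sigma_{j}$.

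Next I would treat the containment statement. The substitution $z=Sw$ transforms $\Omega_{\Sigma}=\{Mz^{2}\leq\hbar\}$ into $\{\tfrac{\hbar}{2}D^{-1}w^{2}\leq\hbar\}=\{D^{-1}w^{2}\leq 2\}$, that is $\sum_{j}\sigma_{j}^{-1}(x_{j}^{2}+p_{j}^{2})\leq 2$. This ellipsoid in $w$-coordinates contains the Euclidean ball of radius $\sqrt{2\sigma_{\min}}$ and no larger one. Mapping back by $S$ gives that the \emph{largest} symplectic ball contained in $\Omega_{\Sigma}$ is $S(B_{\sqrt{2\sigma_{\min}}}^{2n})$. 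Therefore $\Omega_{\Sigma}$ contains a symplectic ball of radius $\sqrt{\hbar}$ if and only if $\sigma_{\min}\geq\hbar/2$, which matches (i).

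Finally, the capacity equivalence is immediate from formula~(\ref{capomega}) applied with $R^{2}=\hbar$: $c(\Omega_{\Sigma})=\pi\hbar/\mu_{\max}=\pi\hbar\cdot 2\sigma_{\min}/\hbar=2\pi\sigma_{\min}$, so $c(\Omega_{\Sigma})\geq\pi\hbar$ iff $\sigma_{\min}\geq\hbar/2$. All three conditions thus collapse to the same inequality on the symplectic spectrum of $\Sigma$. The main obstacle is the second paragraph: one must verify carefully that Williamson's transformation preserves the imaginary part $\tfrac{i\hbar}{2}J$ in the Hermitian positivity test (this is where the symplecticity $S^{T}JS=J$ is essential, not merely invertibility), and that the $2\times 2$ block reduction correctly identifies the quantum condition with $\sigma_{j}\geq\hbar/2$. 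Once this is in place, the geometric and capacity statements are straightforward translations via the same diagonalization.
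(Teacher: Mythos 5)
Your proposal is correct and follows essentially the same route as the paper, which likewise uses Williamson diagonalization to reduce the quantum condition to the statement that the symplectic eigenvalues of $M=\tfrac{\hbar}{2}\Sigma^{-1}$ are all $\leq 1$ (your $\sigma_{\min}\geq\hbar/2$) and then invokes formula (\ref{capomega}) together with $c=c_{\min}^{\mathrm{lin}}$ on ellipsoids to identify the containment and capacity statements. One caveat: your assertion that $S(B^{2n}_{\sqrt{2\sigma_{\min}}})$ is the \emph{largest} symplectic ball inside $\Omega_{\Sigma}$ does not follow from the Euclidean-ball computation alone (the image of a ball under another symplectic map need not be a Euclidean ball in the $w$-coordinates); that upper bound is precisely the nontrivial content of (\ref{capomega}), which you invoke in the last paragraph anyway, so the argument closes.
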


That the conditions $\Omega_{\Sigma}\supset S(B_{\sqrt{\hbar}}^{2n})$ for some
$S\in\operatorname*{Sp}(n)$ and $c(\Omega_{\Sigma})\geq\pi\hbar$ follows from
the formula (\ref{capomega}) which says that $c(\Omega_{\Sigma})=c_{\min
}^{\mathrm{lin}}(\Omega_{\Sigma})$. We have given a proof of these properties
in \cite{Birk,FP,physreps}; it makes use of the Williamson symplectic
diagonalization of $M$ and essentially consists in showing that the condition
(\ref{quantum}) is equivalent to the property that the symplectic eigenvalues
of $M=\frac{\hbar}{2}\Sigma^{-1}$ all are $\leq1$. We have called symplectic
balls of the type $S(B_{\sqrt{\hbar}}^{2n})$ \emph{quantum blobs}
\cite{blobs}; they appear in the Wigner formalism as minimum uncertainty phase
space ellipsoids. The condition (\ref{capom}) can be restated by saying that
$\Omega_{\Sigma}$ is the Wigner ellipsoid of a Gaussian state if and only if
contains a quantum blob.

Summarizing: there is a one-to-one correspondence between phase space
ellipsoids $\Omega_{\Sigma}$ satisfying $\operatorname*{Cap}(\Omega_{\Sigma
})\geq\pi\hbar$ and Gaussian states with Wigner ellipsoid $\Omega_{\Sigma}$.

\subsection{Time evolution of quantum subsystems}

Let $\widehat{\rho}$ be a density matrix on $\mathbb{R}^{n}$ with Wigner
distribution%
\[
\rho(z)=\int_{\mathbb{R}^{n}}e^{\tfrac{i}{\hbar}p\cdot y}\left\langle
x+\tfrac{1}{2}y|\widehat{\rho}|(x-\tfrac{1}{2}y\right\rangle dy~.
\]
We assume from now on that $\rho(z)$ is a Gaussian (\ref{rhum}), (\ref{Gauss})
and, returning to the notation $z=(z_{A},z_{B})$, we write $M=\frac{\hbar}%
{2}\Sigma^{-1}$ in block-form
\begin{equation}
M=%
\begin{pmatrix}
M_{AA} & M_{AB}\\
M_{BA} & M_{BB}%
\end{pmatrix}
~. \label{MC}%
\end{equation}
Since $M=M^{T}>0$ we have $M_{AA}^{T}=M_{AA}>0$, $M_{BB}^{T}=M_{BB}>0$, and
$M_{BA}^{T}=M_{AB}$. We define the Wigner distribution of the subsystem $A$ by
\textquotedblleft taking the partial trace\textquotedblright\
\begin{equation}
\rho_{A}(z_{A})=\int_{\mathbb{R}^{2n_{B}}}\rho(z_{A},z_{B})dz_{B}~.
\label{roadef}%
\end{equation}
A straightforward calculation of Gaussian integrals yields the formula%
\begin{equation}
\rho_{A}(z_{A})=(\pi\hbar)^{-n_{A}}(\det M/M_{BB})^{1/2}e^{-\frac{1}{\hbar
}(M/M_{BB})z_{A}^{2}} \label{rhoaza}%
\end{equation}
hence the covariance matrix of $\rho_{A}$ is
\begin{equation}
\Sigma_{A}=\frac{\hbar}{2}(M/M_{BB})^{-1} \label{sigmaa}%
\end{equation}
and the covariance ellipsoid of $\rho_{A}$ is thus
\begin{equation}
\Omega_{A}=\{z_{A}:(M/M_{BB})z_{A}^{2}\leq\hbar\}~; \label{covreda}%
\end{equation}
it is the orthogonal projection $\Pi_{A}\Omega$ on $\mathbb{R}^{2n_{A}}$ of
the covariance ellipsoid $\Omega$ of $\widehat{\rho}$. For $\rho_{A}(z_{A})$
to qualify as the Wigner distribution of a \textit{bona fide} partial mixed
state $\widehat{\rho}_{A}=\operatorname*{Tr}_{B}(\widehat{\rho})$ it still
remains to prove\footnote{This step is usually ignored in the literature. It
is needed to show that the partial trace indeed is a positive operator. It can
be proven directly using methods from functional analysis (the
\textquotedblleft Kastler--Loupias--Miracle-Sole conditions). Our approach
using the extended PSC is much simpler.} that $\Sigma_{A}$ satisfies the
quantum condition (\ref{quantum}). In view of Theorem \ref{Thm4} it suffices
for this to show that $\Omega_{A}$ contains a symplectic ball $S_{A}%
(B_{\sqrt{\hbar}}^{2n_{A}})$. But this is an immediate consequence of the
extended principle of the symplectic camel in Theorem \ref{Thm3}: in view of
the \textquotedblleft only if\textquotedblright\ part of Theorem \ref{Thm4}
the covariance ellipsoid $\Omega_{\Sigma}$ contains a symplectic ball
$S(B_{\sqrt{\hbar}}^{2n})$ in $\mathbb{R}^{2n}$ and hence $\Omega_{A}=\Pi
_{A}\Omega$ contains a symplectic ball $S_{A}(B_{\sqrt{\hbar}}^{2n_{A}})$ in
$\mathbb{R}^{2n_{A}}$.

We apply the results above to the motion of quantum subsystems. We begin by
noting that the Wigner transform directly links the approximate Hamiltonian
flow (\ref{phaseflow2})%
\[
U_{t}(z_{0})=T(z_{t})S_{t}(z_{0})T(-z_{0})
\]
to the corresponding semiclassical propagator (\ref{stu})%
\[
\widehat{U}_{t}(z_{0})=e^{\frac{i}{\hbar}\gamma(t)}\widehat{T}(z_{t}%
)\widehat{S}_{t}(z_{0})\widehat{T}(-z_{0})
\]
via the intertwining formulas (\ref{cov1}), (\ref{cov2}). We have:
\begin{equation}
W(\widehat{U}_{t}(z_{0})\psi)(z)=W\psi(U_{t}(z_{0})^{-1}(z))~. \label{wut}%
\end{equation}
Here is a simple proof of this equality. Since the prefactor $e^{\frac
{i}{\hbar}\gamma(t)}$ in (\ref{stu}) is eliminated by complex conjugation in
the definition (\ref{wimo}) of the Wigner transform we have, using several
times (\ref{cov1}), (\ref{cov2}),
\begin{align*}
W(\widehat{U}_{t}(z_{0})\psi)(z)  &  =W(\widehat{T}(z_{t})\widehat{S}%
_{t}(z_{0})\widehat{T}(-z_{0})\psi)(z)\\
&  =W(\widehat{S}_{t}(z_{0})\widehat{T}(-z_{0})\psi)(z-z_{t})\\
&  =W(\widehat{T}(-z_{0})\psi)(S_{t}(z_{0})^{-1}(z-z_{t}))\\
&  =W\psi(S_{t}(z_{0})^{-1}(z-z_{t})+z_{0})~.
\end{align*}
Now, for any $z\in\mathbb{R}^{2n}$,
\begin{align*}
U_{t}(z_{0})^{-1}z  &  =T(z_{0})S_{t}(z_{0})^{-1}T(-z_{t})z\\
&  =S_{t}(z_{0})^{-1}(z-z_{t})+z_{0}%
\end{align*}
from which the equality in (\ref{wut}) follows.

Let us state and prove our main result. We write $P_{t}(z_{0})=(S_{t}%
(z_{0})S_{t}^{T}(z_{0}))^{-1}$ in block matrix form (\ref{blockp}).

\begin{theorem}
\label{ThmSemi}Assume that the bipartite quantum system $A\cup B$ is in the
Gaussian state $|\phi_{z_{0}}\rangle$ at initial time $t=0$:
\[
\phi_{z_{0}}(x)=(\pi\hbar)^{-n/4}e^{ip_{0}x/\hbar}e^{-|x-x_{0}|^{2}/2\hbar}~.
\]
At time $t$ the subsystem $A$ will be in a Gaussian mixed state $\widehat{\rho
}_{A,t}$ with Wigner distribution%
\begin{equation}
\rho_{A,t}(z)=(\pi\hbar)^{-n_{A}}(\det M_{A,t}(z_{0}))^{1/2}e^{-\frac{1}%
{\hbar}M_{A,t}(z_{0})z_{a}^{2}} \label{mat}%
\end{equation}
with \ \
\[
M_{A,t}(z_{0})=P_{t}(z_{0})/P_{BB,t}(z_{0}\dot{)}~.
\]
The purity of the state $\widehat{\rho}_{A,t}$ is
\begin{equation}
\mu(\widehat{\rho}_{A,t})=\frac{1}{\sqrt{\det P_{BB,t}(z_{0}\dot{)}}}~.
\label{purityproj}%
\end{equation}

\end{theorem}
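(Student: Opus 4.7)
The plan is to reduce the computation to two ingredients already proved in the excerpt: the Wigner-level transport identity (\ref{wut}) for the semiclassical propagator $\widehat{U}_t(z_0)$, and the Gaussian partial-trace formula (\ref{rhoaza}). Everything else is linear algebra applied to the symmetric symplectic matrix $P_t(z_0) = (S_t(z_0)S_t(z_0)^T)^{-1}$.

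First, I propagate the initial coherent state $\phi_{z_0}$ under $\widehat{U}_t(z_0)$. The global phase $e^{i\gamma(t)/\hbar}$ in (\ref{stu}) is killed by the modulus in the definition (\ref{wimo}) of $W$, so (\ref{wut}) yields
\[
W(\widehat{U}_t(z_0)\phi_{z_0})(z) = W\phi_{z_0}(U_t(z_0)^{-1}z).
\]
Using $W\phi_{z_0}(z) = (\pi\hbar)^{-n}e^{-|z-z_0|^2/\hbar}$ together with $U_t(z_0)^{-1}z - z_0 = S_t(z_0)^{-1}(z-z_t)$, the exponent collapses to $-\frac{1}{\hbar}P_t(z_0)(z-z_t)^2$, producing a Gaussian Wigner distribution of the form (\ref{rhum}) centered at $z_t$ with matrix $P_t(z_0)$.

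Second, I integrate out the $B$-variables. After shifting by $z_t$ to bring the Gaussian to the origin, the formula (\ref{rhoaza}) applied to the block decomposition (\ref{blockp}) of $P_t(z_0)$ gives precisely the expression (\ref{mat}) with $M_{A,t}(z_0) = P_t(z_0)/P_{BB,t}(z_0)$; translating back recenters the marginal at $z_{A,t} = \Pi_A z_t$. That $\rho_{A,t}$ is the Wigner distribution of a bona fide mixed state $\widehat{\rho}_{A,t}$ is exactly what the extended symplectic camel principle (Theorem \ref{Thm3}) guarantees via Theorem \ref{Thm4}: the Wigner ellipsoid of the propagated pure Gaussian contains $\{z_t\}+S_t(z_0)(B^{2n}_{\sqrt{\hbar}})$, so its $\Pi_A$-projection contains a symplectic ball in $\mathbb{R}^{2n_A}$, which is equivalent to the Robertson--Schr\"odinger condition on the reduced covariance $\Sigma_A = \frac{\hbar}{2}M_{A,t}(z_0)^{-1}$.

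Finally, the purity formula follows immediately from (\ref{purity}) applied to $M_{A,t}(z_0)$ combined with the Schur determinant identity (\ref{Schur2})--(\ref{Schur3}): because $P_t(z_0)$ is symplectic we have $\det P_t(z_0) = 1$, hence $\det M_{A,t}(z_0) = 1/\det P_{BB,t}(z_0)$ and so $\mu(\widehat{\rho}_{A,t}) = \sqrt{\det M_{A,t}(z_0)} = 1/\sqrt{\det P_{BB,t}(z_0)}$. The only real obstacle is notational: one must carefully track the two translations (by $z_0$ at initial time and by $z_t$ at time $t$) through the Wigner transport identity and the partial-trace integral, and verify they combine into the single center-shift $z_{A,t}$ predicted by the projection. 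Beyond this bookkeeping, the result is a direct consequence of statements already established in the excerpt.
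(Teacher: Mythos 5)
Your proposal is correct and follows essentially the same route as the paper's own proof: both rest on the Wigner transport identity (\ref{wut}), the Gaussian partial-trace/Schur-complement formula (\ref{rhoaza}), the extended symplectic camel (Theorems \ref{Thm3} and \ref{Thm4}) to certify that the marginal is a \emph{bona fide} quantum state, and the purity formula (\ref{purity}) combined with $\det P_{t}(z_{0})=1$ via (\ref{Schur3}). Your write-up is if anything slightly more explicit than the paper's about the intermediate step (that the propagated state has Wigner Gaussian with matrix $P_{t}(z_{0})$ centered at $z_{t}$), but there is no substantive difference in method.
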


\begin{proof}
It is sufficient to study the case $z_{0}=0$ since the general case is
obtained by translations. To simplify notation we write $U_{t}$, $P_{t}$,
$S_{t}$, \textit{etc.} instead of $U_{t}(0)$, $P_{t}(0)$, $S_{t}(0)$,... The
classical and semiclassical evolution operators are thus here
\[
U_{t}=T(z_{t})+S_{t}\text{ , }\widehat{U}_{t}=\widehat{T}(z_{t})+\widehat{S}%
_{t}~.\text{ }%
\]
In the geometric phase space picture $\phi_{z_{0}}=\phi_{0}$ is represented by
the phase space ball $B_{\sqrt{\hbar}}^{2n}$, and we have%
\[
U_{t}=\{z_{t}\}+S_{t}(B_{\sqrt{\hbar}}^{2n})
\]
to which corresponds the pure Gaussian state $\widehat{U}_{t}\phi_{0}$. In
view of formula (\ref{wut}) we have
\begin{equation}
W(\widehat{U}_{t}\phi_{0})(z)=W\phi_{0}(U_{t}{}^{-1}z)~. \label{wut0}%
\end{equation}
In view of the extended principle of the symplectic camel, the evolution of
the orthogonal projection
\[
\Omega_{A,t}=\Pi_{A}(U_{t}(B_{\sqrt{\hbar}}^{2n}))
\]
on the partial phase space $\mathbb{R}^{2n_{A}}$ satisfies
\begin{equation}
\Omega_{A,t}\supset\{z_{A,t}\}+S_{A,t}(B_{\sqrt{\hbar}}^{2n_{A}})
\end{equation}
(formula (\ref{satt}) in Theorem \ref{ThmFund}). It is explicitly given by%
\[
\Omega_{A,t}=\{z_{A}:(P_{t}/P_{BB,t})z_{A}^{2}\leq\hbar\}
\]
where
\[
P_{t}/P_{BB,t}=(S_{A,t}^{-1})^{T}D_{A,t}S_{A,t}^{-1}%
\]
is the Schur complement of $P_{BB,t}$ in $P_{t}=(S_{t}S_{t}^{T})^{-1}$. In
view of formula (\ref{rhoaza}) $\Omega_{A,t}$ is the covariance ellipsoid of a
mixed quantum state with Wigner distribution
\[
\rho_{A}(z_{A})=(\pi\hbar)^{-n_{A}}(\det(P_{t}/P_{BB,t}))^{1/2}e^{-\frac
{1}{\hbar}(P_{t}/P_{BB,t})z_{A}^{2}}~
\]
hence (\ref{mat}). In view of formula (\ref{purity}) the purity of this state
is
\[
\mu(\widehat{\rho}_{A,t})=\sqrt{\det(P_{t}/P_{BB,t})}%
\]
whence (\ref{purityproj}) in view of formula (\ref{Schur3}).
\end{proof}

Comparing formulas (\ref{entropy}) and (\ref{purityproj}) we see that the
variations of classical entropy and mixedness are related by
\begin{equation}
\Delta\mathcal{S}=-2k_{\mathrm{B}}\ln\mu(\widehat{\rho}_{A,t})~.
\label{eltamu}%
\end{equation}
As in Section \ref{secentropy} (formula (\ref{volate})) the mixedness of the
projected state increases due to its interaction with the subsystem $B$. It
remains constant if and only if $\widehat{\rho}_{A,t}$ is a pure state, which
requires, as in the classical case, that $P_{t}(z_{0})=P_{AA,t}(z_{0})\oplus
P_{BB,t}(z_{0})$ which means that the subsystems $A$ and $B$ do not interact.

\section{Perspectives and Speculations\label{secpersp}}

All our results for subsystems (both classical and semiclassical) crucially
depend on the generalization in Theorem \ref{Thm3} to the linear case of
Theorem \ref{Thm2} (the principle of the symplectic camel). A natural question
that arises is whether one could extend Theorem \ref{Thm3} to more general
canonical transformations than the linear (or affine) ones. For instance, if
$(\Phi_{t}^{H})$ is the phase-flow determined by a Hamiltonian of the
classical type
\[
H(x,p,t)=\frac{1}{2}m^{-1}p^{2}+V(x,t)
\]
could it be true that the projections of $\Phi_{t}(B_{R}^{2n})$ onto the
subspaces $\mathbb{R}^{2n_{A}}$ and $\mathbb{R}^{2n_{B}}$ contain images of
the balls $B_{R}^{2n_{A}}$ and $B_{R}^{2n_{B}}$ by canonical transformations
of $\mathbb{R}^{2n_{A}}$ and $\mathbb{R}^{2n_{B}}$? When $n_{A}=1$ this is
trivially true in view of Gromov's theorem: the projection of $\Phi_{t}%
(B_{R}^{2n})$ onto the plane $\mathbb{R}^{2}$ has an area of at least $\pi
R^{2}$ and must therefore contain the image of the disk $B_{R}^{2}$ by an
area-preserving diffeomorphism of the plane and such diffeomorphisms are
automatically canonical. In the general case $n_{A}>1$ the problem is open at
the time of writing; in fact Abbondandolo and Matveyev \cite{abbo} have shown
that there exist Hamiltonian flows $\Phi_{t}$ for which the answer is
negative, but the associated Hamiltonians are very unphysical. On the positive
side, as already mentioned above, Abbondandolo and Benedetti \cite{abbonew}
have recently refined the results in \cite{abbo} and shown that if the
$\Phi_{t}$ are sufficiently close to linear canonical transformations, then
the volume inequality (\ref{abbo1}) holds, which is a weaker statement than
Theorem \ref{Thm3}. On the other hand, in view of Theorem \ref{Thm3} one may
suspect that the Gromov width of the projection of the image of the of the
ball $B_{R}^{2n}$ by a canonical transformation that is close to a linear one
should be at least $\pi R^{2}$. Even if it is hard to see why such properties
should not be true, we are lacking, for the time being, mathematical
justifications; the above mentioned advances are highly qualitative and seem
to be difficult to implement in practice. They are all related to the question
whether the Gromov width increases under symplectic projections. They
certainly deserve to be studied further.

Another topic which might be worth exploring using the methods outlined in
this paper is the study of Poincar\'{e} recurrence for subsystems. As we have
explained elsewhere \cite{Entropy1} the notion of symplectic capacity seems to
play a fundamental role in recurrence (it was one of the motivations of Gromov
in his study \cite{Gromov} of symplectic non-squeezing properties; see Schlenk
\cite{Schlenk}). It is clear from Theorem \ref{Thm3} that recurrence in the
subsystems $A$ and $B$ is liable to occur faster than in the total system
$A\cup B$. It would be interesting to study this property in relation with the
entropy briefly discussed in Section \ref{secentropy}. We will come back to
this question in a near future.

\begin{acknowledgement}
This work has been supported by the grant P33447 of the Austrian Research
Agency FWF.
\end{acknowledgement}

\begin{acknowledgement}
It is my pleasure and my duty to thank Glen Dennis for a careful reading of
the manuscript and for having pointed out various typos. I also express my
gratitude to the Reviewer for useful comments about the \textquotedblleft
symplectic camel\textquotedblright.
\end{acknowledgement}

\textbf{Data availability statement:} Data sharing not applicable -- no new
data generated

\end{document}